\renewcommand{\length}{\mbox{length}}
\newcommand{\inflow}{\textrm{inflow}}
\newcommand{\outflow}{\textrm{outflow}}
\newcommand{\diam}{\textrm{diameter}}
\newcommand{\iffull}{\ifthenelse{\equal{\fullversion}{true}}}
\newcommand{\ifnfull}{\ifthenelse{\equal{\fullversion}{false}}}
\newcommand{\fullversion}{true}
\newcommand{\arxiv}{true}
\begin{document}
\title{Multiple-Source Single-Sink Maximum Flow in Directed Planar
  Graphs in $O(\diam \cdot n \log n)$ Time}
\author{Philip N. Klein\thanks{Supported in part by NSF Grant
    CCF-0964037}\inst{} and Shay Mozes\thanks{Supported 
    by NSF Grant CCF-0964037 and by a Kanellakis fellowship}\inst{}}
\institute{Brown University, Providence, RI, USA,\\
\email{\{klein,shay\}@cs.brown.edu}}

\maketitle

\begin{abstract}
  We develop a new technique for computing maximum flow in directed planar
  graphs with multiple sources and a single sink that significantly
  deviates from previously known techniques for flow problems. 
This gives rise to  an $O(\diam \cdot n \log n)$ algorithm for the problem.
\end{abstract}

\section{Introduction}

The study of maximum flow in planar graphs has a long history.  In
1956, Ford and Fulkerson introduced the max $st$-flow problem, gave a
generic augmenting-path algorithm, and also gave a particular
augmenting-path algorithm for the case of a planar graph where $s$ and
$t$ are on the same face (that face is traditionally designated to be
the infinite face).  Researchers have since published many
algorithmic results proving running-time bounds on max $st$-flow for
(a) planar graphs where $s$ and $t$ are on the same face, (b)
undirected planar graphs where $s$ and $t$ are arbitrary, and (c)
directed planar graphs where $s$ and $t$ are arbitrary.  The best
bounds known are (a) $O(n)$~\cite{HKRS97}, (b)  $O(n \log \log
n)$~\cite{INSW11}, and (c) $O(n \log n)$~\cite{BorradaileK09},
where $n$ is the number of nodes in the graph.

This paper is concerned with the maximum flow problem in the presence
of multiple sources. In max-flow applied to general
graphs, multiple sources presents no problem: one can reduce the
problem to the single-source case by introducing an artificial source
and connecting it to all the sources.  However, as Miller and
Naor~\cite{MN95} pointed out, this reduction violates planarity unless
all the sources are on the same face to begin with.  Miller and Naor
raise the question of computing a maximum flow in a planar graph with
multiple sources and multiple sinks.  Until recently, 
the best known algorithm for computing multiple-source
max-flow in a planar graph is to use the reduction in conjunction with
a max-flow algorithm for general graphs.  That is, no
planarity-exploiting algorithm was known for the problem.
A few months after developing the technique described in this paper we
developed with collaborators an algorithm for the more general problem of maximum flow
in planar graphs with multiple sources and sinks~\cite{BKMNWN11} which
runs in $O(n \log^3 n)$ time and uses a recursive approach. Given these recent developments, the
algorithm presented here is mostly interesting for the new technique
developed.  

In this paper we present an alternative algorithm for the maximum flow
problem with multiple sources and a single sink in planar graphs that
runs in $O(\diam \cdot n \log n)$ time. The diameter of a graph is defined
as the maximum over all pairs of nodes of the minimum number of
edges in a path connecting the pair. Essentially, we start with a non-feasible flow that
dominates a maximum flow, and convert it into a feasible maximum
preflow by eliminating negative-length cycles in the
dual graph. The main algorithmic tool is a modification of an algorithm of
Klein~\cite{Klein05} for finding multiple-source shortest paths in
planar graphs with nonnegative lengths; our modification identifies and eliminates negative-length cycles.
This approach is significantly different than all previously known
maximum-flow algorithms. While the relation between flow in the primal graph
and shortest paths in the dual graph has been used in many algorithms
for planar flow, considering fundamentally non-feasible flows and
handling negative cycles is novel. We believe that this is an
interesting algorithmic technique and are hopeful it will be useful
beyond the current context.

\subsection{Applications}
Schrijver~\cite{Schrijver02} has written about the history of the
maximum-flow problem.
Ford and Fulkerson, who worked at RAND, were apparently motivated by
a classified memo of Harris and Ross on interdiction
of the Soviet railroad system. That memo, which was declassified, contains a diagram of a planar
network that models the Soviet railroad system and has multiple
sources and a single sink. 
%
%
%
 
A more realistic motivation comes from
selecting multiple nonoverlapping regions in a planar structure.
Consider, for example, the following image-segmentation problem.  A
grid is given in which each vertex represents a pixel, and edges
connect orthogonally adjacent pixels.  Each edge is assigned a cost
such that the edge between two similar pixels has higher cost than
that between two very different pixels.  In addition, each pixel is
assigned a weight.  High weight reflects a high likelihood that the
pixel belongs to the foreground; a low-weight pixel is more likely to
belong to the background.

The goal is to find a partition of the pixels into foreground and
background to minimize the sum 
\begin{eqnarray*}
\lefteqn{\mbox{weight of  {\it background} pixels}}\\
& + & \mbox{cost of
  edges between {\it foreground} pixels and {\em background} pixels}
\end{eqnarray*}
subject to the constraints that, for each component $K$ of foreground
pixels, the boundary of $K$ forms a closed curve in the planar dual
that surrounds all of $K$ (essentially that the component is simply connected).

This problem can be reduced to multiple-source, single-sink max-flow
in a planar graph (in fact, essentially the grid).  For each pixel
vertex $v$, a new vertex $v'$, designated a source, is introduced and
connected only to $v$.  Then the sink is connected to the pixels at
the outer boundary of the grid. See~\cite{BK04} for similar
applications in computer vision.

\subsection{Related Work}
Most of the algorithms for computing
maximum flow in general (i.e., non-planar) graphs
build a maximum flow by starting from the zero flow and iteratively
pushing flow without violating arc capacities. Traditional
augmenting path algorithms, as well as more modern blocking flow
algorithms, push flow from the source to the sink at
each iteration, thus maintaining a feasible flow (i.e., a flow
assignment that respects capacities and obeys conservation at non-terminals) at all times.
Push-relabel algorithms relax the conservation requirement and maintain a feasible
preflow rather than a feasible flow. However, none of these algorithms
maintains a flow assignment that violates arc capacities.

There are algorithms for maximum flow in planar graphs that do use
flow assignments that violate capacities~\cite{Reif83,MN95}. 
However, these violations are not fundamental in the sense that the
flow  does respect capacities up to
a circulation (a flow with no sources or sinks).  In other words, the
flow may over-saturate some arcs, but no cut is over-saturated.
We call such flows \emph{quasi-feasible} flows.

The value of a flow that does over-saturate some cuts is higher than
that of a maximum flow. This situation can be identified by detecting
a negative-length cycle in the dual of the residual graph.
One of the algorithms in~\cite{MN95} uses this property in a
parametric search for the value of the maximum flow. When a
quasi-feasible flow with maximum value is found, it is converted into
a feasible one. This approach is
not suitable for dealing with multiple sources because the size of the 
search space grows exponentially with the number of sources.

Our approach is the first to use and handle fundamentally infeasible
flows.
Instead of interpreting the existence of negative cycles
as a witness that a given flow should not be used to obtain a maximum
feasible flow, we use the negative
cycles to direct us in transforming a fundamentally non-feasible flow
into a maximum feasible flow. A negative-length cycle whose length
is $-c$ corresponds to a cut that is over saturated by $c$ units of
flow. This implies that the flow should be decreased by pushing
$c$ units of flow back from the sink across that cut.

\section{Preliminaries}

In this section we provide basic definitions and notions that are useful in presenting the algorithm. 
Additional definitions and known facts that are relevant to the proof of correctness and to the analysis are presented later on.

We assume the reader is familiar with the basic definitions of planar embedded graphs and their duals (cf.~\cite{BorradaileK09}).
Let $G= \langle V,A \rangle$ be a planar embedded graph with node-set $V$ and arc-set $A$. 
For notational simplicity, we assume here and henceforth that $G$ is connected and has
no parallel edges and no self-loops.
For each arc $a$ in the arc-set $A$, 
we define two oppositely directed darts, one in the same orientation as $a$ (which 
we sometimes identify with $a$) and one in the opposite orientation. 
We define $\rev(\cdot)$ to be the function that takes each dart to the corresponding dart in 
the opposite direction.
It is notationally convenient to equate the edges, arcs and darts of $G$ with the edges, arcs and darts of the dual $G^*$.
It is well-known that contracting an edge that is not a self-loop
corresponds to deleting it in the dual, and that a set of darts forms a simple directed cycle in $G$ iff it forms a 
simple directed cut in the dual $G^*$~\cite{Whitney1933}; see Fig.~\ref{fig:dual}.
\begin{figure}
\ifthenelse{\equal{\arxiv}{true}}{
\centerline{\includegraphics[scale=0.35]{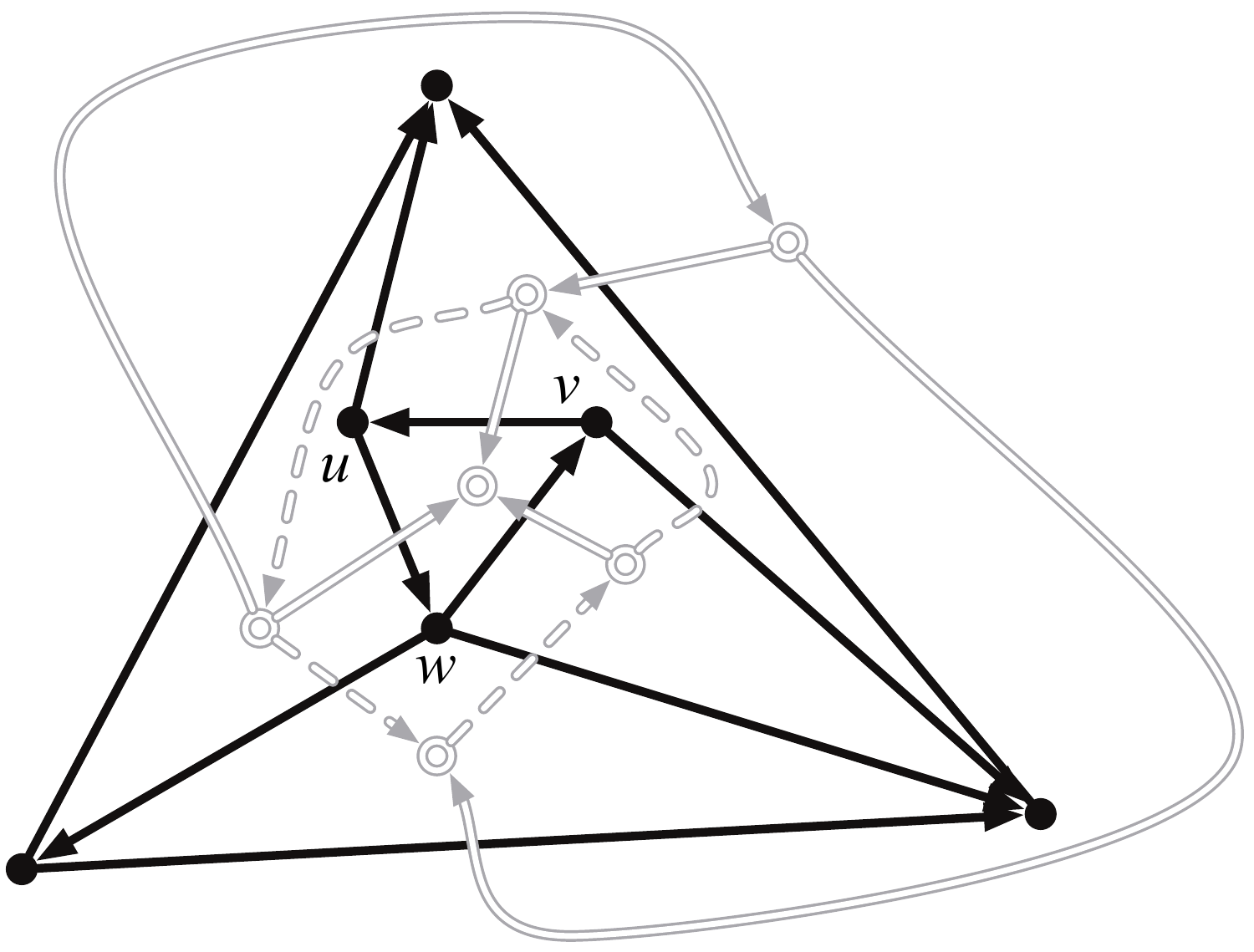}}
}{
\centerline{\includegraphics[scale=0.35]{figures/dualBW.pdf}}
}
\caption{A primal graph (black) and its dual (gray). The cut $\vec\delta(\{u,v,w\})$ in the primal 
corresponds to a counterclockwise dual cycle (dashed arcs).}
\label{fig:dual}
\end{figure}

Given a length assignment $\length(\cdot)$ on the darts, we extend it
to sets $D$ of darts by $\length(D) = \sum_{d \in D} \length(d)$.
For a set $X$ of nodes, let $\vec{\delta}(X)$ denote the set
of darts crossing the cut $(X, V \setminus X)$. Namely, 
$\vec{\delta}(X)  = \set{d : \tail(d) \in X, \head(d) \notin X}$. 
Let $T$ be a rooted spanning tree of $G^*$.  
For a node $v\in G^*$, let $T[v]$ denote the unique root-to-$v$ path in $T$.
The {\em reduced length of $d$ with
respect to $T$} is defined by 
\begin{equation}
\length_T(d) = \length(d) + \length(T[\tail_{G^*}(d)]) -
\length(T[\head_{G^*}(d)])
\end{equation}
The edges of $G$ not in $T$ form a spanning tree $\tau$ of $G$.
A dart $d$ is \emph{unrelaxed} if $\length_T(d) < 0$. Note
that, by definition, only darts not in $T$ can be unrelaxed. 
A \emph{leafmost unrelaxed} dart is an unrelaxed dart $d$ of $\tau$
such that no proper descendant of $d$ in $\tau$ is unrelaxed.
For a dart $d$ not in $T$, the \emph{elementary cycle of $d$ with
  respect to $T$ in $G^*$} is the cycle composed of $d$ and the unique path in
  $T$ between the endpoints of $d$.

\subsection{Flow}
Let $S \subset V$ be a set vertices called sources, and let $t\notin S$ be vertex called sink.

A {\em flow assignment} $f(\cdot)$ in $G$ is a real-valued function
on the darts of $G$ satisfying {\em antisymmetry}:
\begin{equation*} \label{eq:antisymmetry}
f(\rev(d))= -f(d)
\end{equation*}

A {\em capacity assignment} $c(\cdot)$ is a real-valued function on darts.
A flow assignment $f(\cdot)$ is {\em feasible} or {\em respects capacities}
if, for every dart $d$, $f(d) \leq c(d)$.  Note that, by antisymmetry,
$f(d) \leq c(d)$ implies $f(\rev(d)) \geq -c(d)$.  Thus a negative
capacity on a dart acts as a lower bound on the flow on the reverse
dart.  

For a given flow assignment $f(\cdot)$, the {\em net inflow} (or just
{\em inflow}) node $v$ is
$\inflow_f(v) = \sum_{d \in A: \head(d) = v} f(d)$.
\footnote{an equivalent definition, in terms of arcs, is 
$\inflow_f(v) = \sum_{a \in A : \head(a) =v} f(a) - \sum_{a \in A :
  \tail(a) = v} f(a) $.}
The {\em outflow} of $v$ is
$\outflow_f(v) = - \inflow_f(v)$.  The {\em value} of $f(\cdot)$ is the
  inflow at the sink, $\inflow_f(t)$.
A flow assignment $f(\cdot)$ is said to \emph{obey conservation} at
node $v$ if $\inflow_f(v) = 0$.
A flow assignment is a \emph{circulation} if it obeys conservation at
all nodes.
A flow assignment is a \emph{flow} if it obeys conservation at every node other than the sources and sink. 
It is a \emph{preflow} if for every node other than the sources, $\inflow_f(v) \geq 0$.   

For two flow assignments $f,f'$, the \emph{addition} $f+f'$ is the flow that
assigns $f(d)+f'(d)$ to every dart $d$.
A flow assignment $f$ is a \emph{quasi-feasible flow} if there exists
a circulation $\phi$ such that $f+\phi$ is a feasible flow. This
concept is not new, but it is so central to our algorithm that
we introduce a name for it.

The \emph{residual graph} of $G$ with respect to a flow assignment $f(\cdot)$ is the graph $G_f$ with
the same arc-set, node-set and sink, and with capacity assignment $c_f(\cdot)$
defined as follows. For every dart $d$, $c_f(d) = c(d) - f(d)$.

Given a feasible preflow $f_+$ in a planar graph, there exists an $O(n
\log n)$-time algorithm that converts $f_+$ into a feasible flow $f$
with the same value (cf.~\cite{JV82}). 
In fact, this can be done in linear time
by first canceling flow 
cycles using the technique of Kaplan and
Nussbaum~\cite{KaplanNussbaum2009}, and then
by sending any excess flow from back to
the sources in topological sort order.

\subsection{Quasi-Feasible Flows and Negative-Length Dual Cycles}
Miller and Naor prove that $f$ is a quasi-feasible flow in $G$ if and only if
$G^*_f$ contains no negative-length cycles. Intuitively, a
negative-length cycle in the dual corresponds to a primal cut whose
residual capacity is negative. That is, the corresponding cut is
over-saturated. Since a circulation obeys conservation at all nodes it
does not change the total flow across any cut. Therefore, there exists no
circulation whose addition to $f$ would make it feasible. 
Conversely, they show that if $G_f^*$ has no negative-length
cycles, then shortest path distances from any arbitrary node in
the dual define a feasible circulation in the primal.

\ifthenelse{\equal{\fullversion}{true}}{
\subsubsection {Potentials, Clockwise and Left-of}
The set of circulations in a graph forms a vector space, called the
{\em cycle space} of the graph.  For each face $f$, let $\uvec_f$ be the vector that
assigns flow value 1 to each dart in the clockwise boundary of the
face.  Each such vector is in the cycle space.  Moreover, the set
$\set{\uvec_f\ :\ f\neq f_\infty}$ is a basis for the cycle space.
Therefore every circulation can be written as a linear combination 
$\sum_f p_f \uvec_f$.  The coefficients $p_f$ are called {\em face
  potentials}.  By convention, the potential of $f_\infty$ is zero.

A circulation is {\em clockwise} ({\em counterclockwise}) if the
corresponding potentials are nonnegative (nonpositive).  (A
circulation can be neither.)  For $u$-to-$v$ paths $P$ and $Q$, we say
$P$ is {\em left of} ({\em right of})$Q$ if $P\rev(Q)$ is clockwise (counterclockwise).

\subsubsection{Price Functions and Reduced lengths}
For a directed graph $G$ with arc-lengths $\ell(\cdot)$,
a {\em price function} is a function $\phi$ from the nodes of $G$ to
the reals.  For an arc $uv$, the {\em reduced length with respect to
  $\phi$} is $\ell_\phi(uv) = \ell(uv) +\phi(u)-\phi(v)$.
For any
nodes $s$ and $t$, for any $s$-to-$t$ path $P$, $\ell_\phi(P) =
\ell(P)+\phi(s)-\phi(t)$.  This shows that an $s$-to-$t$ path is
shortest with respect to $\ell_\phi(\cdot)$ iff it is shortest with
respect to $\ell(\cdot)$. In particular, the reduced length of any
cycle is the same as its length.

\subsubsection{Winding Numbers}
For a curve $\gamma$ in the plane and a path $P$, the \emph{winding
  number} of $P$ about $\gamma$ is the number of times $P$ crosses
$\gamma$ right-to-left, minus the number of times it crosses $\gamma$
left-to-right.
}

\section{The Algorithm}

We describe an algorithm that, given a graph $G$ with $n$ nodes, a
sink $t$ incident to the infinite face $f_\infty$, and multiple sources, computes a maximum flow  
from the sources to $t$ in time $O(\diam \cdot n \log n)$, where $\diam$ is the
diameter of the face-vertex incidence graph of $G$.
Initially, each dart $d$ has a non-negative capacity, which we denote by
$\length(d)$ since we will interpret it as a length in the dual.
During the execution of the algorithm, the length assignment
$\length(\cdot)$ is modified. 
Even though the algorithm does not explicitly maintain a flow at all
times, we will refer throughout the
paper to the flow pushed by the algorithm. 
At any given point in the execution of the algorithm we can interpret the lengths of
darts in the dual as their residual capacities in the primal. By the flow pushed
by the algorithm, we mean the flow that would induce these residual capacities.

\begin{algorithm}[h!]\caption{Multiple-source single-sink maximum flow $(G,S,t,c_0)$}\label{alg:diameter}
{\bf Input:} planar directed graph $G$ with capacities $c_0$, source set $S$, sink $t$
incident to $f_\infty$\\
{\bf Output:} a maximum feasible flow $f$\\
\begin{algorithmic}[1]
\vspace{-10pt}
\STATE $\length(d) := c_0(d)$ for every dart $d$
\STATE initialize spanning tree $T$ of ${G^*}$ rooted at $f_\infty$ using right-first-search
\STATE let $\tau$ be the spanning tree of $G$ consisting of edges not
in $T$, and root $\tau$ at $t$
\FOR{each source $s\in S$} \label{initialize}
\FOR{each dart $d$ on the $s$-to-$t$ path in $\tau$} \label{initialize-iteration}
\STATE $\length(d) := \length(d)-\length(\delta(\set{s}))$
\STATE $\length(\rev(d)) := \length(\rev(d)) +
\length(\delta(\set{s}))$
\ENDFOR
\ENDFOR
\WHILE{there exist unrelaxed darts in ${G^*}$}
\STATE let $\hat d$ be an unrelaxed dart that is leafmost in $\tau$
\label{leafmost}
\IF [ //perform a pivot]{$\hat d$ is not a back-edge in $T$} \label{negcyc}
\STATE remove from $T$ the parent edge of $\head_{G^*}(\hat d)$ and insert
$\hat d$ into $T$ \label{relax} 
\ELSE [ // fix a negative cycle by pushing back flow]
\STATE let $C$ denote the elementary cycle of $\hat d$ with respect to
$T$ in $G^*$ \label{negcyc}
\STATE \algorithmicfor \ each dart $d$ of the $\hat d$-to-$t$ path of darts in the primal spanning tree $\tau$
\vspace{-5pt}
\begin{eqnarray*}
&& \length(d) :=  \length(d) + |\length(C)|\\
&& \length(\rev(d)) := \length(\rev(d)) - |\length(C)|
\end{eqnarray*}\label{fix} 
\vspace{-18pt}
\FOR{every dart $d$ strictly enclosed by $C$}
\STATE $f(d) := c_0(d) - \length_T(d)$ \label{record}
\STATE in $G$, contract $d$ \label{delete}
\COMMENT{//in ${G^*}$, delete $d$}
\ENDFOR
\ENDIF
\ENDWHILE
\STATE $f(d) := c_0(d) - \length_T(d)$ for every dart $d$ \label{final}
\STATE convert the preflow $f$ into a flow.\label{conv}
\end{algorithmic}
\end{algorithm}

The algorithm starts by pushing an infeasible flow that saturates
$\delta(s)$ from every source $s$ to $t$ (Line~\ref{initialize}).
It then starts to reduce that flow in order to make it feasible. 
This is done by using 
a spanning tree $\tau$ of $G$ and a
spanning tree $T$ of ${G^*}$ such that each edge is in exactly one of
these trees. 
The algorithm repeatedly identifies a negative
cycle $C$ in ${G^*}$, which corresponds in $G$ to an over-saturated cut.
Line~\ref{fix} decreases the lengths of
darts on a primal path in $\tau$ that starts at the sink $t$ and ends 
at some node $v$ (a face in ${G^*}$) that is enclosed by $C$. We call
such a path a \emph{pushback path}. This change corresponds to pushing
flow back from the sink to $v$ along the pushback path, making
the over-saturated cut exactly saturated.
We call the negative-turned-zero-length cycle $C$ a \emph{processed cycle}.
Processed cycles enclose no negative cycles. This implies
that there exists a feasible preflow  that saturates the cut
corresponding to a processed cycle (see Lemma~\ref{lem:flow-in-saturated-cycle}).
The algorithm records that preflow (Line~\ref{record}) and contracts the source-side of
the cut into a single node referred to as a \emph{super-source}.

When no negative length cycles are left in the contracted graph, the
flow pushed by the algorithm is quasi-feasible.
That is, the flow pushed by the algorithm is
equivalent, up to a circulation, to a feasible flow in the
contracted graph.
Combining this feasible flow in the contracted graph and the preflows recorded at
the times cycles were processed yields a maximum feasible preflow for
the original uncontracted graph.
In a final step, this feasible preflow is converted into a feasible
flow.

We now describe in more detail how negative cycles are identified and
processed by the
algorithm.  The algorithm maintains a spanning tree $T$ of ${G^*}$ rooted
at the infinite face $f_\infty$ of $G$, and a
spanning tree $\tau$ of $G$, rooted at the sink $t$. The tree $\tau$ consists of the edges not in $T$.
The algorithm tries to transform $T$ into a shortest-path tree by
pivoting into $T$ unrelaxed darts according to some particular order (line~\ref{leafmost}). 
However, if an unrelaxed dart $\hat d$ happens to be a
back-edge\footnote{a non-tree dart $d$ is a back edge of $T$ if
  $\head(d)$ is an ancestor in $T$ of $\tail(d)$.} in
$T$, the corresponding elementary cycle $C$ is a negative-length
cycle. To process $C$, flow is pushed from $t$ to $\head(\hat
d)$ along the $t$-to-$\hat d$ path in $\tau$ (line~\ref{fix}). The amount of flow
the algorithm pushes is $|\length(C)|$, so after $C$ is processed its
length is zero, and $\hat d$ is no longer unrelaxed.  
The algorithm then records a feasible preflow for $C$, deletes the
interior of $C$, and proceeds to find the next unrelaxed dart. See
Fig.~\ref{fig:pivot} for an illustration.
When all darts are
relaxed, $T$ is a shortest-path tree, which implies no more negative-length cycles
exist.
\begin{figure}
\ifthenelse{\equal{\arxiv}{true}}{
\centerline{\includegraphics[scale=0.3]{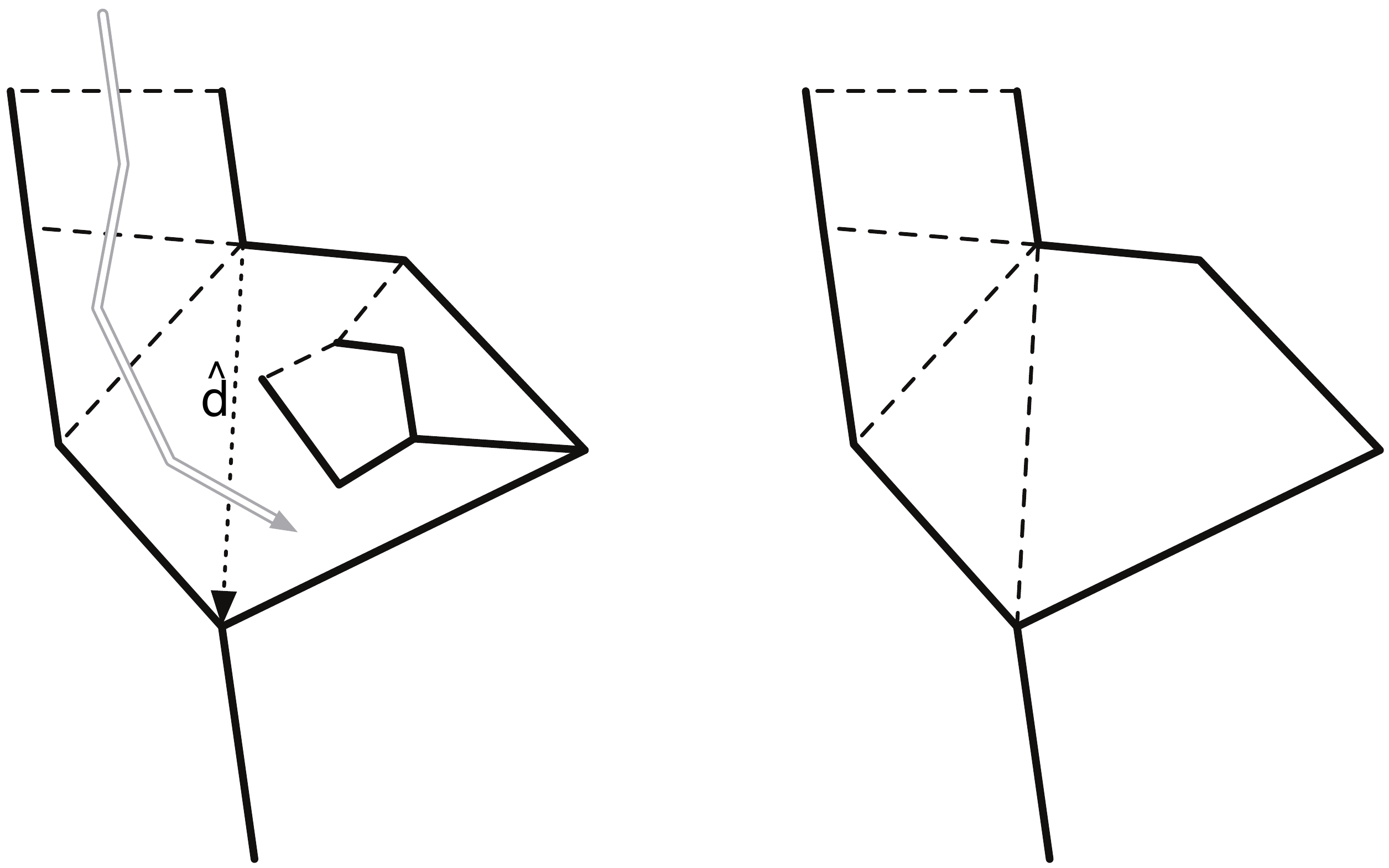}}
}{
\centerline{\includegraphics[scale=0.3]{figures/pivotBW.pdf}}
}
\caption{Identifying and processing a negative cycle. On the left, the tree $T$ is
  shown in solid. Non-tree edges are dashed.  The unrelaxed
  dart $\hat d$ is dotted. The elementary cycle of $\hat d$
  w.r.t. $T$ has negative length.
The primal pushback path is double-lined gray. After the negative
cycle is processed, its interior is deleted (on the right).}
\label{fig:pivot}
\end{figure}

To control the number of pivots we initialize $T$ to have a property
called \emph{right-shortness}, and show it is preserved by the
algorithm and that it implies that the number of pivots of any dart is bounded by
the diameter of the graph.
We later discuss how data structures enable the iterations to be
performed efficiently.

\section{Correctness and Analysis}\label{sec:analysis}
\ifnfull{
We refer the reader to the full version of this paper\footnote{available on the authors' website and at 
\texttt{http://arxiv.org/abs/1104.4728}} 
for complete
proofs of some of the lemmas in this section as well as for the precise definitions of
the following (standard) terms that are used in the sequel: clockwise
and counterclockwise, being left-of and right-of, reduced lengths, and
winding numbers.
}

We begin with a couple of basic lemmas.
Consider the sink $t$ as the infinite face of ${G^*}$. 
By our conventions, any clockwise (counterclockwise) cycle $C$ in ${G^*}$ corresponds to a primal
cut $(X, V \setminus X)$ such that $t \in X$ ($t \notin X)$; see Fig.~\ref{fig:dual}.
\begin{lemma}\label{lem:delta-cw}
Consider the sink $t$ as the infinite face of ${G^*}$. 
The length of any
clockwise (counterclockwise) dual cycle does not decrease (increase) when flow is pushed to $t$.
The length of any
clockwise (counterclockwise) dual cycle does not increase (decrease) when flow is pushed from $t$.
\end{lemma}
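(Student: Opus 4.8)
The plan is to reduce the entire lemma to a statement about the sign of the net flow across a single cut, using the algorithm's own interpretation of lengths as residual capacities. First I would recall that at every point $\length(d)$ equals the residual capacity $c_0(d)-f(d)$ of dart $d$ under the currently pushed flow $f$ (and that raw and reduced lengths agree on any cycle, so it is immaterial which of the two interpretations of $f$ we use). Since a clockwise (resp.\ counterclockwise) cycle $C$ in ${G^*}$ is, by the stated convention and Fig.~\ref{fig:dual}, exactly a primal cut $\vec{\delta}(X)$ with $t\in X$ (resp.\ $t\notin X$), summing residual capacities over the darts of $C$ gives
\[
\length(C)=\sum_{d\in\vec{\delta}(X)}\bigl(c_0(d)-f(d)\bigr)=c_0(\vec{\delta}(X))-f(\vec{\delta}(X)),
\]
where $f(\vec{\delta}(X))=\sum_{d\in\vec{\delta}(X)}f(d)$ is the net flow from $X$ to $V\setminus X$. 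Because $c_0(\vec{\delta}(X))$ is fixed, every change in $\length(C)$ is exactly the negative of the change in $f(\vec{\delta}(X))$, so the whole lemma is about how $f(\vec{\delta}(X))$ moves.

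Next I would establish the elementary crossing identity: pushing $\epsilon\ge 0$ units along a path from $a$ to $b$ changes $f(\vec{\delta}(X))$ by $\epsilon\,\bigl([b\notin X]-[a\notin X]\bigr)$, where $[\cdot]$ is the $0/1$ indicator. This is a telescoping count of signed crossings of the path with the cut boundary and does not depend on the particular path. To connect this to the lemma's phrasing, I would note that ``pushing flow to $t$'' means adding a flow whose only net inflow is at $t$; by flow decomposition this is a nonnegative combination of source-to-$t$ path flows plus circulations, and circulations cross every cut with net zero. Hence it suffices to treat a single path with $b=t$, $a=s$ for a source $s$; dually, ``pushing flow from $t$'' decomposes into $t$-to-$v$ paths, so there $a=t$, $b=v$.

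Then I would simply run the four cases. For a clockwise $C$ ($t\in X$) and a push to $t$, $[t\notin X]=0$ gives $\Delta f(\vec{\delta}(X))=-\epsilon[s\notin X]\le 0$, so $\length(C)$ does not decrease; for a counterclockwise $C$ ($t\notin X$), $\Delta f(\vec{\delta}(X))=\epsilon\bigl(1-[s\notin X]\bigr)\ge 0$, so $\length(C)$ does not increase. Swapping the endpoints for a push from $t$ flips every sign, yielding the second sentence of the lemma (clockwise does not increase, counterclockwise does not decrease). Summing over the paths of the decomposition, using linearity of $f(\vec{\delta}(X))$ in $f$, lifts each single-path inequality to the full pushed flow.

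The step needing the most care is pinning down orientation conventions, so that the correspondence ``clockwise $\Leftrightarrow t\in X$'' matches the direction in which the darts of $C$ actually cross the cut, i.e.\ that the dual cycle $C$ equals $\vec{\delta}(X)$ rather than $\vec{\delta}(V\setminus X)$; an off-by-one-sign here would exchange ``does not decrease'' with ``does not increase'' throughout. I would settle this once against the stated convention and Fig.~\ref{fig:dual}, after which the four sign computations are routine. The only other point I would state explicitly is that circulations in the decomposition contribute nothing across any cut, which is what lets ``pushing to/from $t$'' be treated as a nonnegative combination of the relevant single source-to-$t$ or $t$-to-$v$ paths.
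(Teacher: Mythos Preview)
Your proposal is correct and follows essentially the same approach as the paper: identify $\length(C)$ with the residual capacity of the corresponding primal cut, use the convention that clockwise cycles correspond to cuts $(X,V\setminus X)$ with $t\in X$, and observe that pushing flow to (from) $t$ can only increase (decrease) that residual capacity. The paper's own proof is a two-sentence sketch of exactly this; your version spells out the crossing identity and the path decomposition explicitly, which is more careful but not a different idea.
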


\iffull{
\begin{proof}
The change in the length of a dual cycle is equal to the change in the
residual capacity of the corresponding primal cut. Since clockwise cycles
correspond to cuts $(X,V \setminus X)$ s.t. $t \in X$, it follows that the
residual capacity of a cut corresponding to a clockwise dual cycle can
only increase when flow is pushed to $t$. The proofs of the other claims are similar.
\end{proof}
}

The following is a
restatement of a theorem of Miller and Naor~\cite{MN95}. 
\begin{lemma}\label{lem:flow-by-reduced-lengths}
Let $G$ be a planar graph. Let $c_0$ be a capacity function on
the darts of $G$. Let $f$ be a flow assignment. 
Define the length of a dart $d$ to be its residual capacity $c_0(d) - f(d)$.
If $f$ is quasi-feasible then $f'(d) = c_0(d)
- \length_T(d)$ is a feasible flow assignment, where $T$ is a shortest-path tree in $G^*$. 
Furthermore, $f'=f+\phi$ for some circulation $\phi$.
\end{lemma}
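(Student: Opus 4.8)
The plan is to extract a shortest-path tree from the hypothesis and then verify the three required properties of $f'$ in turn: that it respects capacities, that it is antisymmetric (so it is indeed a flow assignment), and that it differs from $f$ by a circulation. First I would invoke the characterization of Miller and Naor recalled just above the lemma: since $f$ is quasi-feasible and the length of a dart is its residual capacity $c_0(d)-f(d)$, the dual residual graph $G^*_f$ contains no negative-length cycle. Hence shortest-path distances in $G^*$ from the root are well defined and a shortest-path tree $T$ exists; write $\phi^*(v)=\length(T[v])$ for the distance to face $v$. The optimality (triangle) inequality $\length(T[\head_{G^*}(d)]) \le \length(T[\tail_{G^*}(d)]) + \length(d)$ is exactly the statement that $\length_T(d)\ge 0$ for every dart $d$. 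Therefore $f'(d)=c_0(d)-\length_T(d)\le c_0(d)$, which is feasibility.

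For antisymmetry, I would observe that the two potential terms in $\length_T$ swap under $\rev$, because $\tail_{G^*}(\rev(d))=\head_{G^*}(d)$ and $\head_{G^*}(\rev(d))=\tail_{G^*}(d)$, so they telescope to give $\length_T(d)+\length_T(\rev(d)) = \length(d)+\length(\rev(d))$. Since $\length(d)=c_0(d)-f(d)$ and $f$ is antisymmetric, the right-hand side equals $c_0(d)+c_0(\rev(d))$. Consequently $f'(d)+f'(\rev(d)) = \bigl(c_0(d)+c_0(\rev(d))\bigr) - \bigl(\length_T(d)+\length_T(\rev(d))\bigr) = 0$, so $f'$ is a feasible flow assignment.

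For the ``furthermore'' clause I would compute the difference directly, using $\length(d)=c_0(d)-f(d)$:
\[
\phi(d) := f'(d)-f(d) = \length(d)-\length_T(d) = \phi^*(\head_{G^*}(d)) - \phi^*(\tail_{G^*}(d)).
\]
Thus $\phi$ is the potential difference across each dart of the function $\phi^*$ defined on the faces of $G$; equivalently $\phi=\sum_f \phi^*(f)\,\uvec_f$ in the face-potential description of the cycle space. Every such combination is a circulation, which yields $f'=f+\phi$ with $\phi$ a circulation.

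The main obstacle is the bookkeeping in this last step: one must confirm that the quantity $\phi^*(\head_{G^*}(d)) - \phi^*(\tail_{G^*}(d))$ coincides, with the correct sign, with the value that $\sum_f \phi^*(f)\,\uvec_f$ assigns to $d$ under the primal--dual identification of darts, i.e. that a potential difference on faces lands in the cycle space of $G$ and not in some cut space. This is precisely the duality between tensions in $G^*$ and circulations in $G$, and getting the orientation convention of the dual darts right is where the care is needed; the remaining verifications are short calculations.
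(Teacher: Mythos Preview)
Your proof is correct and follows essentially the same route as the paper's: both invoke the Miller--Naor characterization to ensure there are no negative-length cycles in $G^*$, use the resulting shortest-path tree $T$ so that reduced lengths are nonnegative (giving feasibility), and identify $f'-f$ with the face-potential difference $\phi(d)=\length(T[\head_{G^*}(d)])-\length(T[\tail_{G^*}(d)])$, which is a circulation. The only cosmetic difference is that the paper cites Miller--Naor directly for the fact that this $\phi$ is a feasible circulation and then reads off both conclusions from that, whereas you unpack the same content by verifying $\length_T(d)\ge0$ and the cycle-space membership separately (and you additionally spell out antisymmetry, which the paper leaves implicit).
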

\iffull{
\begin{proof}
Miller and Naor~\cite{MN95} proved that there exist a feasible
circulation in $G$ if and only if ${G^*}$ contains no negative-length
cycles when residual capacities in $G$ are interpreted as lengths in
${G^*}$. Furthermore, they proved that the
flow assignment  $\phi(d) = \length(T[\head_{G^*}(d)]) - \length(T[\tail_{G^*}(d)])$ is
such a feasible 
circulation in $G$.

Therefore, if $f$ is quasi-feasible then $\phi(d)$ is a feasible
circulation with respect to the residual capacities $c(d) = c_0(d) - f(d)$.
Thus, $c(d) - \phi(d) \geq 0$. Interpret $c(d) - \phi(d)$ as
residual capacity with respect to some flow $f'$ in $G$. Namely,
$c(d) - \phi(d) = c_0(d) - f'(d)$. 
Therefore, $f'(d) =  c_0(d) -c(d) + \phi(d) = f(d) + \phi(d)$.
Furthermore, 
\begin{eqnarray*}
f'(d) &= & c_0(d) -c(d) + \phi(d) \\
& = & c_0(d) - c(d) + \length(T[\head_{G^*}(d)]) - \length(T[\tail_{G^*}(d)]) \\
& = & c_0(d) - \left(c(d) + \length(T[\tail_{G^*}(d)]) - \length(T[\head_{G^*}(d)]) \right)\\
& = & c_0(d) - \length_T(d) 
\end{eqnarray*}
where the last equality follows from the definition of the reduced
lengths with respect to $T$.
\end{proof}
}

\subsubsection{Correctness}
To prove the correctness of the algorithm, 
we first show that an elementary cycle w.r.t. a back-edge is indeed a
negative-length cycle.

\begin{lemma} Let $\hat d$ be an unrelaxed back-edge w.r.t. $T$. 
The length of the elementary cycle of $\hat d$ w.r.t. $T$ is negative.
\end{lemma}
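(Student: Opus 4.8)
The plan is to show that, for a back-edge, the length of its elementary cycle coincides \emph{exactly} with its reduced length, after which the hypothesis that $\hat d$ is unrelaxed finishes the argument immediately. First I would fix notation: write $u = \tail_{G^*}(\hat d)$ and $v = \head_{G^*}(\hat d)$. By the definition of a back-edge, $v$ is an ancestor of $u$ in $T$, so the unique tree path joining the endpoints of $\hat d$ runs monotonically from $v$ down to $u$. The elementary cycle $C$ thus consists of $\hat d$, traversed from $u$ to $v$, followed by this downward tree path from $v$ back to $u$.

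The computation is then short. The decisive observation is that, because $v$ is an ancestor of $u$, the $v$-to-$u$ tree path is a \emph{downward} path: it uses tree darts in the same parent-to-child orientation as the root-to-node paths $T[\cdot]$, and it is precisely the suffix of $T[u]$ that extends $T[v]$. Hence $T[u]$ decomposes as $T[v]$ followed by the $v$-to-$u$ path, and the latter's length telescopes to
\[
\length(v\text{-to-}u\text{ path in }T) = \length(T[u]) - \length(T[v]).
\]
Summing the dart lengths over $C$ (using $\length(D)=\sum_{d\in D}\length(d)$) and adding the contribution of $\hat d$ gives
\[
\length(C) = \length(\hat d) + \length(T[\tail_{G^*}(\hat d)]) - \length(T[\head_{G^*}(\hat d)]) = \length_T(\hat d),
\]
which is exactly the definition of the reduced length of $\hat d$. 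Since $\hat d$ is unrelaxed, $\length_T(\hat d) < 0$, and therefore $\length(C) < 0$.

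The one point I would take care over — and essentially the only place the hypothesis is used — is the orientation of the tree path, which is what makes the telescoping valid. The clean identity $\length(C)=\length_T(\hat d)$ holds precisely because $\hat d$ is a back-edge: the path from $v$ to $u$ stays within a single root-to-leaf branch of the arborescence $T$ and is traversed in the parent-to-child direction, so it reuses the very darts appearing in $T[u]$. For a general non-tree dart the tree path would instead ascend to the least common ancestor of $u$ and $v$ and then descend, and the ascending portion would traverse reverse darts (whose lengths need not be the negatives of the tree-dart lengths in this residual-capacity setting); the cycle length would then no longer equal $\length_T(\hat d)$. This is exactly why the back-edge assumption is essential, and it is the step whose correctness I would state explicitly rather than leave implicit.
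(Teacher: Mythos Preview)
Your proof is correct and takes essentially the same approach as the paper: both establish the identity $\length(C)=\length_T(\hat d)$ by exploiting that the tree portion of the elementary cycle is the downward $v$-to-$u$ subpath of $T[u]$. The paper phrases this via a price function (tree darts in the parent-to-child direction have zero reduced length, and the reduced length of a cycle equals its length), while you compute the same telescoping sum directly; the content is identical.
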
 \label{lem:negcycle}
\begin{proof}
Consider the price function induced by from-root distances in
$T$.  Every tree  dart whose tail is closer to the
root than its head has zero reduced length. The reduced
length of an unrelaxed dart is negative. Since $\hat d$ is a
back edge w.r.t. $T$, its elementary cycle $C$ uses
darts of $T$ whose length is zero. Therefore the reduced length of $C$
equals the reduced length of just $\hat d$, which is negative.
The lemma follows since the length and reduced length of any cycle are
the same.
\end{proof}

\begin{lemma}\label{lem:fix}
Let $C$ be the negative-length cycle defined in
line~\ref{negcyc}. After $C$ is processed in line~\ref{fix},
$\length(C)=0$ and $\hat d$ is relaxed.
Furthermore, the following invariants hold just before line~\ref{fix}
is executed:
\begin{enumerate}
\item the flow pushed by the algorithm
satisfies flow conservation at every node other than the sources
(including super-sources) and the sink.
\item the outflow at the
sources is non-negative.
\item there are no clockwise negative-length cycles.
\end{enumerate}
\end{lemma}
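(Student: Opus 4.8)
The plan is to dispatch the two claims about the processed cycle $C$ first, and then to establish the three invariants by induction on the number of executions of line~\ref{fix}. For the claims about $C$, the decisive observation is that the pushback path shares exactly one dart with $C$, namely $\hat d$: the cycle $C$ consists of $\hat d$ together with darts of the dual tree $T$, whereas the pushback path lies in the primal tree $\tau$, and since $T$ and $\tau$ are edge-disjoint, the only dart of $C$ whose length is touched in line~\ref{fix} is $\hat d$, traversed by the path in the orientation in which it occurs in $C$. Hence line~\ref{fix} raises $\length(\hat d)$ by $|\length(C)|$ and leaves every other dart of $C$ untouched, so $\length(C)$ passes from its (negative, by Lemma~\ref{lem:negcycle}) value to $0$. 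Because line~\ref{fix} alters only lengths of $\tau$-darts, every quantity $\length(T[\cdot])$ is unchanged, so $\length_T(\hat d)$ rises by the same $|\length(C)|$; as $\length_T(\hat d)=\length_T(C)=\length(C)$ before the push (the tree darts of $C$ have zero reduced length), we get $\length_T(\hat d)=0$ afterwards and $\hat d$ is relaxed.

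For the invariants I would induct over the executions of line~\ref{fix}, checking that the initialization establishes them and that each pushback (line~\ref{fix}), each contraction (line~\ref{delete}), and each pivot (line~\ref{relax}) preserves them; throughout, the flow pushed is $f(d)=c_0(d)-\length(d)$. Invariant~1 rests on the elementary fact that pushing flow along a path alters $\inflow_f$ only at its two endpoints, the entering and leaving darts cancelling at every internal node. The initialization pushes along $s$-to-$t$ paths, so conservation is lost only at sources and at $t$; a pivot changes neither $\length$ nor $f$; and a pushback pushes along a $t$-to-$v$ path whose endpoint $v$ lies in the interior of $C$ and is absorbed into a super-source by the ensuing contraction. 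For Invariant~2, after initialization each source $s$ carries outflow $\length(\vec{\delta}(\{s\}))\ge 0$ (the paths of other sources meet $s$ only internally and contribute nothing to its net outflow, and for the same reason do not change $\length(\vec{\delta}(\{s\}))$), while each contraction creates a super-source whose outflow equals the flow across the freshly saturated cut, namely $\sum_{d\in C}c_0(d)\ge 0$; a pushback meets every earlier source only internally, so their outflows are unchanged.

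The substance of the lemma is Invariant~3. By Lemma~\ref{lem:delta-cw}, the initialization (pushing flow toward $t$) does not decrease any clockwise length, so these stay nonnegative starting from $c_0\ge 0$, and pivots and contractions leave clockwise lengths unchanged. The one delicate operation is the pushback, which pushes flow away from $t$ and, again by Lemma~\ref{lem:delta-cw}, can only lower clockwise lengths; I must show it lowers none below zero. Write $C=\vec{\delta}(X)$ with $v\in X$ and $t\notin X$. After the contraction, a clockwise cycle of the new graph is $\vec{\delta}(Y)$ with $t\in Y$ and with $Y$ either disjoint from $X$ or containing $X$ (a contracted $X$ cannot be split). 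If $X\subseteq Y$ then $v\in Y$, the pushback path runs between two vertices of $Y$, its net crossing of $\vec{\delta}(Y)$ is zero, and $\length(\vec{\delta}(Y))$ is unchanged, hence still nonnegative. The remaining case $X\cap Y=\varnothing$ is the heart of the matter, and is where I expect to spend the most care: here the push lowers $\length(\vec{\delta}(Y))$ by exactly $|\length(C)|$, so I need $\length(\vec{\delta}(Y))\ge|\length(C)|$ before the push.

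I would obtain this bound from the exact directed uncrossing identity, valid for any length function,
\begin{equation*}
\length(\vec{\delta}(X))+\length(\vec{\delta}(Y))=\length(\vec{\delta}(X\cap Y))+\length(\vec{\delta}(X\cup Y))+[X\setminus Y\to Y\setminus X]+[Y\setminus X\to X\setminus Y],
\end{equation*}
where $[A\to B]$ is the total length of darts from $A$ to $B$. Specializing to $X\cap Y=\varnothing$, the first right-hand term vanishes; $\vec{\delta}(X\cup Y)$ is clockwise (as $t\in Y\subseteq X\cup Y$) and so has nonnegative length by the inductive hypothesis; and the two cross terms reduce to $[X\to Y]+[Y\to X]=\sum_{e}\bigl(c_0(d)+c_0(\rev(d))\bigr)\ge 0$, because the two darts of each edge $e$ joining $X$ and $Y$ satisfy $\length(d)+\length(\rev(d))=c_0(d)+c_0(\rev(d))$ by antisymmetry of $f$. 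Hence $\length(\vec{\delta}(Y))\ge-\length(\vec{\delta}(X))=|\length(C)|$ before the push, so $\length(\vec{\delta}(Y))\ge 0$ after it, closing the induction. The crux is precisely this uncrossing step: recognizing that the only dangerous clockwise cuts are those disjoint from $X$, and that for them the clockwise inductive hypothesis on $\vec{\delta}(X\cup Y)$ combines with the antisymmetry identity to absorb exactly the $|\length(C)|$ drop.
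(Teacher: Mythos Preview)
Your argument is correct, and for the claim that $\length(C)=0$ after line~\ref{fix} and for invariants~1 and~2 it runs parallel to the paper. The real divergence is in invariant~3.

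The paper never uncrosses. Instead it observes that invariant~3 is an immediate consequence of invariants~1 and~2 already established for the post-pushback state: if conservation holds at every non-terminal and every source (including the new super-source) has nonnegative outflow, then for any $Y$ with $t\in Y$ the net flow across $\vec{\delta}(Y)$ equals $\sum_{v\notin Y}\inflow_f(v)=-\sum_{s\in S\setminus Y}\outflow_f(s)\le 0$, whence the residual capacity $c_0(\vec{\delta}(Y))-f(\vec{\delta}(Y))\ge c_0(\vec{\delta}(Y))\ge 0$. Thus no clockwise dual cycle can be negative, in one stroke, with no case analysis. The point is that the pushback amount $|\length(C)|$ is chosen precisely so that the newly created super-source has nonnegative outflow; once that is secured, the clockwise bound follows globally from conservation rather than cut-by-cut.

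Your uncrossing route is sound and buys you a proof that does not lean on having first re-established invariants~1 and~2 after the pushback; it works directly with lengths. But it is heavier machinery than needed here. One small gap worth patching: you assert that the pushback path traverses $\hat d$ ``in the orientation in which it occurs in $C$'' but do not justify it. The paper derives this from the inductively held invariant~3: since there are no clockwise negative cycles, $C$ is counterclockwise, which forces $\hat d$ (as a primal dart) to point toward the root $t$ of $\tau$, so that line~\ref{fix} increases $\length(\hat d)$ rather than $\length(\rev(\hat d))$.
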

\ifnfull{
The proof is omitted. The idea is that initially the invariants hold
since flow is pushed from the sources to the sink in the
initialization step. Subsequently, the invariants are maintained since
whenever a negative-length cycle is processed, flow is pushed
back into that cycle to make the corresponding cut exactly
saturated. Then the interior of the cycle is deleted.
}{ 
\begin{proof}
The proof is by induction on the number of iterations of the main while loop of the algorithm.
The initial flow pushed by the algorithm is from the sources to the
sink. Therefore, conservation is satisfied at every non-terminal, the
outflow at the sources is non-negative, and there is a one-to-one
correspondence between over-saturated cuts in the primal and 
counterclockwise negative-length cycles in the dual.

For the inductive step, since by the inductive assumption there are no
clockwise cycles, 
Lemma~\ref{lem:negcycle} implies that the cycle $C$ in line~\ref{negcyc} must be
counterclockwise.
Therefore, the unrelaxed dart $\hat d$ in line~\ref{leafmost} points in $\tau$
towards the root $t$. Thus, the length of $\hat d$ is increased in
line~\ref{fix} by $|\length(C)|$, and 
the length of $C$ becomes zero. This shows the main claim of the
lemma. 
To complete the proof of the invariants, note that the interior of $C$ is
deleted in ${G^*}$, so the flow pushed in line~\ref{fix} is now a flow from
the sink to the newly created super-source. This shows that invariant
(1) is preserved. Having $\length(C)=0$ implies that the total flow
pushed so far by the algorithm from the newly created super-source is
non-negative. This
shows (2). 

Since the outflow at all sources is 
non-negative after the pushback, the inflow at the sink must still be non-negative and equals
the sum of outflows at the sources. Therefore, for any negative-length
dual cycle, the corresponding primal
over-saturated cut $\Gamma(X)$
 must be such that $t \notin
X$. Hence any negative-length cycle is counterclockwise. This shows (3). 
\end{proof}
}

We now prove properties of the flow computed by the
algorithm. The following lemma characterizes the flow recorded in
line~\ref{record}. Intuitively, this shows that it
is a saturating feasible flow for the cut corresponding
to the processed cycle.

\begin{lemma}\label{lem:flow-in-saturated-cycle}
Let $C$ be a cycle currently being processed. 
Let $(X,V\setminus X)$ be the corresponding cut, where $t \notin X$. Let
$\vec \delta_c$ be the set
of darts crossing the cut.
The flow assignment $f$
computed in the loop in Line~\ref{record} satisfies:
\begin{enumerate}
\item $f(d)  \leq c_0(d)$ for all darts whose endpoints are both in $X$. \label{record-capacities}
\item every node in $X$ except sources and
  tails of darts of $\vec\delta_c$ satisfies conservation.\label{record-conservation}
\item for every $d' \in \vec\delta_c$, $\sum_{d : \head(d) = \tail(d')}
  f(d) \geq \sum_{d \in \delta_c : tail(d) = tail(d')} c_0(d)
  $ \label{record-saturate}
\end{enumerate}
\end{lemma}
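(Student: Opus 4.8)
The plan is to read the recorded values through the identity
$f(d) = c_0(d) - \length_T(d) = g(d) + \phi(d)$,
where $g(d) = c_0(d) - \length(d)$ is the flow currently pushed by the algorithm and $\phi(d) = \length(T[\head_{G^*}(d)]) - \length(T[\tail_{G^*}(d)])$ is the flow induced by the dual from-root potentials; this is exactly the decomposition of Lemma~\ref{lem:flow-by-reduced-lengths}. Since $\phi$ is a difference of face potentials it is antisymmetric and conserves at every primal node, i.e.\ it is a circulation; hence $g+\phi$ conserves at every node where $g$ does, and throughout I will reason about the invariants of $g$ (Lemma~\ref{lem:fix}) together with the reduced lengths $\length_T$. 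I also record the identity $\length(d)+\length(\rev(d)) = c_0(d)+c_0(\rev(d)) \ge 0$, which the initialization and every pushback preserve, since each update changes $\length(d)$ and $\length(\rev(d))$ by opposite amounts.

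First I would pin down the reduced lengths on which all three parts rest. Because $\hat d$ is leafmost, no proper descendant of $\hat d$ in $\tau$ is unrelaxed; these descendants are precisely the edges of $\tau$ lying strictly inside $C$, so every such dart has $\length_T \ge 0$. Every remaining interior dart lies on an edge of $T$: a tree dart taken in its tree direction has $\length_T = 0$ by definition of $\length_T$, and its reverse satisfies $\length_T(\rev(d)) = \length(d)+\length(\rev(d)) \ge 0$ by the identity above. Thus $\length_T(d) \ge 0$ for every dart with both endpoints in $X$, which is Claim~\ref{record-capacities}. For the cut, the darts of $\vec\delta_c$ are exactly the darts of the processed cycle $C$; its tree part consists of tree darts in their tree direction (so $\length_T = 0$), and the pushback of Line~\ref{fix} changes only lengths of $\tau$-darts, leaving every $T$-potential $\length(T[\cdot])$ fixed and merely raising $\length_T(\hat d)$ from $\length(C)$ to $0$. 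Hence $\length_T(d') = 0$, i.e.\ $g(d')+\phi(d') = c_0(d')$, for every $d' \in \vec\delta_c$: each cut dart is exactly saturated.

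For Claim~\ref{record-conservation} I would observe that the pushback path of Line~\ref{fix} runs from the boundary of $X$ out to $t$ and so touches no interior dart; the interior values of $g$ therefore equal those just before Line~\ref{fix}, where invariant~(1) of Lemma~\ref{lem:fix} gives conservation of $g$ at every node other than sources and $t$. If $v \in X$ is neither a source nor a tail of a dart of $\vec\delta_c$, then $v$ is incident to no cut edge (an edge to the outside would contribute a dart of $\vec\delta_c$ with tail $v$), so all darts at $v$ are interior and $\inflow_f(v) = \inflow_g(v) + \inflow_\phi(v) = 0$.

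Claim~\ref{record-saturate} follows by the same bookkeeping at a cut-tail $v = \tail(d')$. Splitting the darts into $v$ into interior darts and the reverses of the cut darts leaving $v$, antisymmetry together with $g(d')+\phi(d') = c_0(d')$ yields $\sum_{d:\head(d)=v} f(d) = \inflow_g(v) + \sum_{d\in\vec\delta_c:\tail(d)=v} c_0(d)$. If $v$ is not a source then $\inflow_g(v)=0$ and the bound holds with equality; the delicate case is a source cut-tail, where invariant~(2) gives $\inflow_g(v) \le 0$, so the interior inflow can fall short of the outgoing cut capacity by exactly the source's own (non-negative) outflow, which the source itself supplies. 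I expect this boundary analysis to be the main obstacle: one must establish that $\vec\delta_c$ coincides with the zero-reduced-length darts of the processed cycle and then account for flow at the tails of $\vec\delta_c$, treating ordinary tails by conservation and source tails by the non-negative-outflow invariant (just as Claim~\ref{record-conservation} already sets sources aside).
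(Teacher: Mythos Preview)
Your approach is essentially the paper's: restrict to the region enclosed by $C$, use leafmost-ness of $\hat d$ to conclude that after the pushback there are no unrelaxed darts inside (so $T$ is a shortest-path tree there), invoke Lemma~\ref{lem:flow-by-reduced-lengths} to get feasibility and the decomposition $f=g+\phi$ with $\phi$ a circulation, and then read off conservation from Lemma~\ref{lem:fix} together with $\length_T(d')=0$ on the cut. Your treatment of Claim~\ref{record-capacities} via a direct case split (non-tree descendants of $\hat d$ are relaxed; tree darts have $\length_T=0$; their reverses have $\length_T=\length(d)+\length(\rev(d))\ge 0$) is a slightly more hands-on version of what the paper obtains by appealing to Lemma~\ref{lem:flow-by-reduced-lengths} abstractly, but the content is identical.

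One correction: for a non-source cut tail you assert $\inflow_g(v)=0$ and equality in Claim~\ref{record-saturate}. This is not quite right at $v=\tail_G(\hat d)$: the pushback of Line~\ref{fix} deposits $|\length(C)|$ units of excess there, so $\inflow_g(v)=|\length(C)|>0$ after the fix and the bound is strict. The paper records exactly this (``this is an inequality only for $d'=\hat d$''). The sign goes the right way, so your conclusion survives.

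On the source cut-tail case you flag as delicate: your remark ``which the source itself supplies'' does not prove Claim~\ref{record-saturate} at such a vertex (you have shown the interior inflow may fall short by $\outflow_g(v)\ge 0$); it is really an argument that the downstream preflow property of Lemma~\ref{lem:uncontracted-preflow} still holds. The paper's own proof does not treat this case separately either, so you are matching its level of care here.
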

\ifnfull{
The proof is omitted. The main idea is that at the time $C$
is processed it encloses no negative-length cycles. Furthermore, the tree $T$ is a shortest-path
tree for the interior of $C$ at that time. Since the length of the
cycle is adjusted to be zero, the corresponding cut is exactly
saturated, so by the relation of quasi-feasible flows, the distances
in $T$ define a feasible flow in which that cut is saturated.
}{
\begin{proof}
let ${G^*}_c$ denote the region of ${G^*}$
enclosed by $C$. 
Let $G_c$ denote the graph obtained from $G$ by contracting the sink
side of $\Gamma_c$ into a single node. Note that ${G^*}_c$ is the dual of
$G_c$.
Let $\hat{d}$ be the non-tree dart of $C$. 
Since $\hat{d}$ is leafmost unrelaxed, there are no unrelaxed darts in
${G^*}_c$ other than $\hat{d}$. By Lemma~\ref{lem:fix}, after the loop
in line~\ref{fix} is executed $\hat{d}$
is no longer unrelaxed as well. Therefore, the restriction of $T$ to ${G^*}_c$ is a shortest-path tree
for ${G^*}_c$, and ${G^*}_c$ contains no negative cycles. The conditions of
Lemma~\ref{lem:flow-by-reduced-lengths} are satisfied, so when $f(d)$
is set to 
$c_0(d) - \length_T(d)$ in Line~\ref{record}, it respects the capacities $c_0(d)$ for
all $d \in G_c$. This shows~\ref{record-capacities}.

By Lemma~\ref{lem:fix}, just before $C$ is processed, the
restriction of the flow pushed by the algorithm
to $G_c$ satisfies conservation everywhere except at the sources and
at the sink. In Line~\ref{fix} flow is pushed back to
$\tail(\hat{d})$, so there is more flow entering $\tail(\hat d)$ than
leaving it.
Therefore, the restriction of the flow pushed by the algorithm to
$G_c$ satisfies conservation everywhere except at the sources, the
sink (which in $G_c$ is the only node outside $\Gamma_c$) and
$\tail(\hat d)$. By Lemma~\ref{lem:flow-by-reduced-lengths}, the flow 
$f(d) = c_0(d) - \length_T(d)$ differs from the flow pushed by the algorithm
by a circulation, so $f(\cdot)$ satisfies conservation at all those
nodes as well. 

In particular, for every dart $d' \in \delta_c$, $0 \leq \sum_{d : \head(d) = \tail(d')}
  c_0(d) - \length_T(d)$ (this is an inequality only for $d' = \hat d$). 
However, note that the darts of $\delta_c$ are not strictly enclosed by
$C$, and that $\length_T(d)=0$ for every $d \in C$.
Since Line~\ref{record} assigns $f(d) =   c_0(d) - \length_T(d) $ only
to darts strictly enclosed by $C$ (and implicitly assigns zero to all
other darts), we get that, for the darts of
$\delta_c$, $ 0 \leq  \sum_{d : \head(d) = \tail(d')} f(d) + \sum_{d  \in \Gamma_c : \head(d) = \tail(d')} c_0(d)$.
Or alternatively, for every $d' \in \Gamma_c$, $\sum_{d : \head(d) = \tail(d')}
  f(d) \geq \sum_{d \in \Gamma_c : tail(d) = tail(d')} c_0(d) $. 
This shows~\ref{record-conservation} and ~\ref{record-saturate}.
\end{proof}
}

\begin{lemma}\label{lem:sources-in-saturatable-cycle}
The following invariant holds. In $G^*$ every source is enclosed
by some zero-length cycle that encloses no negative-length cycles.
\end{lemma}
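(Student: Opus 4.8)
The plan is to prove the statement as an invariant that holds at the end of every iteration of the main while loop, by induction on the number of iterations. The key simplification I would exploit is that there is a canonical candidate for the enclosing cycle: for any source $\sigma$ (an original source, or a super-source), the dual cycle $\vec\delta(\{\sigma\})$ that bounds the face $\sigma$ encloses exactly that single face, and therefore encloses no cycle at all --- in particular no negative one. Consequently the ``encloses no negative-length cycle'' clause is automatic, and the entire content of the lemma reduces to showing that $\vec\delta(\{\sigma\})$ is \emph{zero-length}. Since the length of the dual cycle $\vec\delta(\{\sigma\})$ equals the residual capacity $c_0(\vec\delta(\{\sigma\})) - \outflow_f(\sigma)$ of the corresponding primal out-cut, the task becomes: show that the out-cut of every source is exactly saturated, i.e.\ $\outflow_f(\sigma) = c_0(\vec\delta(\{\sigma\}))$. (If $\sigma$ is a cut vertex so that $\vec\delta(\{\sigma\})$ is not a simple cycle, one takes instead the simple bounding cycle of the face; this is a routine technicality I would address separately.)

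For the base case I would analyze the initialization loop. For a single source $s$ in isolation, the $s$-to-$t$ path in $\tau$ leaves $s$ through exactly one dart of $\vec\delta(\{s\})$, so pushing $\length(\vec\delta(\{s\}))$ units along it increases $\outflow_f(s)$ by precisely that amount and saturates the out-cut. The only thing to check is that the iterations for the different sources do not interfere: a push performed for another source can affect $\vec\delta(\{s\})$ only when its $\tau$-path passes through $s$, and such a path enters $s$ through one dart and leaves through another. The increment of $\length(\rev(\cdot))$ on the entering dart and the decrement of $\length(\cdot)$ on the leaving dart cancel, so the quantity $\length(\vec\delta(\{s\}))$ read by the algorithm is still $c_0(\vec\delta(\{s\}))$ when $s$ is processed, and by conservation such a pass-through leaves $\outflow_f(s)$ unchanged. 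Hence after the full initialization every source out-cut is exactly saturated.

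For the inductive step I would split on the two branches of the loop body. A pivot modifies only the trees $T$ and $\tau$ and changes no length and no face of $G^*$; every cut that was saturated stays saturated, so the invariant is untouched. When a negative cycle $C$ is processed there are two things to verify. First, the new super-source $\sigma_C$ obtained by contracting the vertices enclosed by $C$: its out-cut is exactly $C$, which by Lemma~\ref{lem:fix} has length $0$ after the pushback, and once the interior of $C$ is deleted in $G^*$ the cycle $C$ bounds only the single face $\sigma_C$; thus $\sigma_C$ is enclosed by a zero-length cycle enclosing no negative cycle. Second, every source $\sigma$ not enclosed by $C$ must keep its saturation. The only length changes in this iteration come from the pushback, which sends flow from $t$ to a vertex $v$ enclosed by $C$, and from the contraction, which touches only darts strictly enclosed by $C$. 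Since $\sigma$ lies outside $C$, none of its incident darts is contracted and $\sigma \neq v$; the out-cut $\vec\delta(\{\sigma\})$, enclosing only the face $\sigma$, separates $t$ from $v$ only if $v=\sigma$, so $\outflow_f(\sigma)$ is unchanged and $\vec\delta(\{\sigma\})$ stays zero-length. (Equivalently, by Lemma~\ref{lem:delta-cw} pushing flow from $t$ cannot decrease the counterclockwise cycle $\vec\delta(\{\sigma\})$ below $0$, while the separation observation rules out any increase.)

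The main obstacle is precisely this last piece of bookkeeping: showing that the two operations that actually change lengths --- the initialization pushes and the pushback accompanying a processed cycle --- never disturb the saturation of a source other than the one(s) they are meant to affect. In both cases the crux is the same planar fact: a cut enclosing a single source face $\sigma$ is affected by a flow push only when an endpoint of that push is $\sigma$ itself, and the relevant endpoint (the pushback target, or the far end of an initialization path) always lies on the source side of the cycle being handled, hence is distinct from every surviving source outside it. Making this precise --- and in particular reconciling the orientation bookkeeping with the statement of Lemma~\ref{lem:delta-cw} --- is where the care is needed; the pivot case and the super-source case are comparatively immediate once the reduction to ``out-cut saturation'' is in place.
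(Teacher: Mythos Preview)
Your proposal is correct and follows essentially the same approach as the paper: both take the face boundary $\vec\delta(\{\sigma\})$ of each (super-)source as the witnessing cycle, and both argue inductively that this cycle has length zero because a pushback can change its length only when its far endpoint is $\sigma$ itself, which happens precisely when $\sigma$ is being absorbed into a newly created super-source. Your treatment is more detailed than the paper's---in particular, you spell out the non-interference argument for the initialization loop (different sources' $\tau$-paths passing through $s$ leave $\length(\vec\delta(\{s\}))$ unchanged by cancellation), which the paper simply asserts---and your reformulation ``every source's out-cut is exactly saturated'' is just the primal restatement of ``the dual face boundary has length zero,'' so the two arguments are equivalent rather than genuinely different.
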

\ifnfull{
The proof is omitted. The idea is that the initialization guarantees the
invariant holds initially. It is preserved since a processed cycle has
length zero and its interior is deleted and replaces with a super-source.
}{
\begin{proof}
Initially, the face of $G^*$ corresponding to each source is such a cycle.
At the time a cycle $C$ is processed and a super-source $s$ is created, 
$C$ is a zero-length cycle enclosing $s$ that encloses no
negative-length cycles.
The length of $C$ only changes if the pushback path ends at a dart of $C$ in the execution
of Line~\ref{fix}  when processing a cycle $C'$ at some later time. 
Since the interior of $C$ is deleted when $C$ is processed, 
$C'$ encloses $C$. At this time, the interior of $C'$, is contracted
into a single new super-source which is enclosed by the zero-length cycle
$C'$ that encloses no negative cycles.
\end{proof}
}
\iffull{
The following lemma is an easy consequence of
lemma~\ref{lem:sources-in-saturatable-cycle}.
}{
The following lemma, whose proof is omitted, is an easy consequence of
Lemma~\ref{lem:sources-in-saturatable-cycle}.
}
\begin{lemma}\label{lem:maximum}
The following invariant holds. There exists no feasible flow $f'$ s.t.
$\inflow_{f'}(t)$ is greater than $\inflow_f(t)$, where $f$ is
the flow pushed by the algorithm.
\end{lemma}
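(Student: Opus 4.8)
The plan is to prove the statement by weak duality: exhibit a vertex set $Y$ containing all sources and excluding $t$ whose cut $\vec{\delta}(Y)$ has nonpositive residual capacity, and then bound the value of every feasible flow by the capacity of this separating cut. Concretely, for any such $Y$ and any feasible flow $f'$, flow conservation of $f'$ at the non-terminals gives $\inflow_{f'}(t) = \sum_{d \in \vec{\delta}(Y)} f'(d) \le \sum_{d \in \vec{\delta}(Y)} c_0(d)$, using $f'(d)\le c_0(d)$. Applying the same conservation computation to the flow $f$ pushed by the algorithm, which satisfies conservation at every non-terminal by invariant~(1) of Lemma~\ref{lem:fix}, gives $\inflow_f(t) = \sum_{d \in \vec{\delta}(Y)} f(d)$. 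Hence it suffices to produce a $Y$ with $\length(\vec{\delta}(Y)) = \sum_{d \in \vec{\delta}(Y)} (c_0(d) - f(d)) \le 0$; this immediately yields the chain $\inflow_{f'}(t) \le \sum_{d \in \vec{\delta}(Y)} c_0(d) \le \sum_{d \in \vec{\delta}(Y)} f(d) = \inflow_f(t)$, which is the lemma.

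To build $Y$, I would take the union of the node sets enclosed by the zero-length cycles guaranteed by Lemma~\ref{lem:sources-in-saturatable-cycle}, one around each source. Each such cycle is counterclockwise (it encloses a source but not $t$, which we regard as the infinite face of $G^*$), so $Y$ contains every source and excludes $t$; by our conventions $\vec{\delta}(Y)$ therefore corresponds to a counterclockwise dual cycle. It remains to bound $\length(\vec{\delta}(Y))$, which I would do by uncrossing the source regions one at a time. Writing $\partial(\cdot)$ for the counterclockwise boundary and viewing enclosed-face-sets as $2$-chains in the dual, the boundary operator is \emph{linear}, so for the running union $F = \bigcup_{i<k} F_i$ and the freshly added source region $F_k$ we have the exact identity $\length(\partial(F\cup F_k)) + \length(\partial(F\cap F_k)) = \length(\partial F) + \length(\partial F_k)$. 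Since $F_k$ is bounded by a zero-length cycle, $\length(\partial F_k)=0$, and since $F\cap F_k \subseteq F_k$ lies inside a cycle enclosing no negative-length cycle, every cycle of $\partial(F\cap F_k)$ has nonnegative length, whence $\length(\partial(F\cap F_k)) \ge 0$. Inductively assuming $\length(\partial F)\le 0$, this gives $\length(\partial(F\cup F_k)) \le 0$, and the base case is a single zero-length source cycle. Iterating over all sources yields the desired $Y$ with $\length(\vec{\delta}(Y)) \le 0$.

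The main obstacle I expect is the uncrossing step, and in particular making the planar-topology bookkeeping precise in the (possibly partially contracted) dual: one must check that the union of enclosed regions is bounded by a well-defined counterclockwise dart-cut, that $\partial(F\cap F_k)$ decomposes into cycles each enclosed by the no-negative-cycle region $F_k$ (so that its length is nonnegative), and that the accumulating union never swallows $t$. A key simplification is that the linearity of the boundary map turns the usual submodular \emph{inequality} into an exact \emph{identity}, which sidesteps the sign subtleties that would otherwise arise for directed cuts with possibly negative residual lengths. The clockwise/counterclockwise accounting of Lemma~\ref{lem:delta-cw}, together with invariant~(3) of Lemma~\ref{lem:fix}, is what keeps all orientations consistent, so that the final cut is genuinely source-side and its length is nonpositive.
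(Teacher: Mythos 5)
Your overall plan (exhibit a single source-side cut $Y$ with $\length(\vec\delta(Y))\le 0$ and invoke weak duality) is sound, and it genuinely differs from the paper, which never assembles a global cut: the paper applies weak duality separately to each zero-length cycle of Lemma~\ref{lem:sources-in-saturatable-cycle} — if $f'$ pushed more flow out of the sources enclosed by such a cycle than $f$ does, that cycle would have negative length in the dual of the residual graph of $f'$, contradicting feasibility — and then sums over the cycles. However, your pivotal uncrossing step rests on a false claim. The ``exact identity'' $\length(\partial(F\cup F_k)) + \length(\partial(F\cap F_k)) = \length(\partial F) + \length(\partial F_k)$ does not hold. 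Linearity of the boundary operator is a statement about $1$-chains, in which a dart $d$ and $\rev(d)$ are negatives of one another; but $\length(\cdot)$ (residual capacity) does not descend to a linear functional on $1$-chains, precisely because residual lengths are not antisymmetric: in general $\length(\rev(d)) \neq -\length(d)$ (the two lengths sum to $c_0(d)+c_0(\rev(d))$, not to zero). Concretely, darts running between $F\setminus F_k$ and $F_k\setminus F$ cancel in the chain identity but their lengths do not cancel; dart-by-dart accounting gives
\begin{equation*}
\length(\vec\delta(F)) + \length(\vec\delta(F_k)) \;=\; \length(\vec\delta(F\cup F_k)) + \length(\vec\delta(F\cap F_k)) \;+\; \sum_{e}\bigl(\length(d_e)+\length(\rev(d_e))\bigr),
\end{equation*}
the sum ranging over edges $e$ with one endpoint in $F\setminus F_k$ and the other in $F_k\setminus F$. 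For a counterexample, take two sources $s_1,s_2$ joined by an edge of positive capacity: right after initialization $\length(\vec\delta(\{s_1\}))=\length(\vec\delta(\{s_2\}))=0$, yet $\length(\vec\delta(\{s_1,s_2\})) = -\bigl(c_0(d)+c_0(\rev(d))\bigr) < 0$, whereas your identity (with $F\cap F_k=\emptyset$) would force it to be $0$. So the ``key simplification'' you invoke to sidestep the sign subtleties of negative residual lengths is exactly where the argument breaks.

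Fortunately the error goes in the safe direction, so your proof is repairable. The lost terms pair up, per crossing edge, into $\length(d)+\length(\rev(d)) = c_0(d)+c_0(\rev(d)) \ge 0$: the algorithm's length updates (initialization and line~\ref{fix}) always add to one dart what they subtract from its reverse, so dart-pair sums stay equal to the original edge capacity, which is nonnegative. Hence the genuine submodular \emph{inequality} $\length(\vec\delta(F\cup F_k)) + \length(\vec\delta(F\cap F_k)) \le \length(\vec\delta(F)) + \length(\vec\delta(F_k))$ does hold in this signed setting, and it is all your induction needs; the sign subtleties are handled by pairing reverse darts, not by linearity. (Your other flagged step — that each cycle of $\partial(F\cap F_k)$ has nonnegative length — is also fine, and in fact trivial in the algorithm's actual state, where each of Lemma~\ref{lem:sources-in-saturatable-cycle}'s cycles bounds a single face of the contracted $G^*$, so the intersections are empty; the real content of the uncrossing is precisely the crossing-edge terms your identity discards.) With the inequality substituted for the identity, your global-cut argument goes through and is a legitimate alternative to the paper's per-cycle accounting.
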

\iffull{
\begin{proof}
Let $f_s$ be the amount of flow source $s$ delivers to the sink in
the flow pushed by the algorithm. 
Consider a flow $f'$ and let $f'_s$ be the amount of flow source $s$
delivers to the sink in $f'$.
By Lemma~\ref{lem:sources-in-saturatable-cycle}, at any time along the
execution of the algorithm, every source is enclosed in a zero length
cycle. Consider such a zero-length cycle $C$.
If $\sum _{s \textrm{ enclosed by } C}  f'(s) > \sum _{s  \textrm{ enclosed by } C} f(s)$
 then the dual of the residual graph
w.r.t. $f'$ contains a negative cycle, so $f'$ is not
feasible. 
\end{proof}
}{}

\begin{lemma}\label{lem:maximum-feasible}
The flow $f$ computed in Line~\ref{final} is a maximum feasible flow
in the contracted graph $G$ w.r.t. the capacities $c_0$.
\end{lemma}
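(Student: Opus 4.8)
The plan is to read off, from the termination of the while loop, that the flow pushed by the algorithm is quasi-feasible in the contracted graph, and then to apply Lemma~\ref{lem:flow-by-reduced-lengths} together with Lemma~\ref{lem:maximum}. First I would note that when control reaches Line~\ref{final} the while loop has terminated, so no unrelaxed darts remain in $G^*$; by definition this means every non-tree dart has nonnegative reduced length, i.e.\ $T$ is a shortest-path tree in the current (contracted) $G^*$ under the lengths $\length(\cdot)$. Interpreting these lengths as the residual capacities $c_0 - f_{\mathrm{alg}}$ of the flow $f_{\mathrm{alg}}$ pushed by the algorithm, the existence of a shortest-path tree certifies that $G^*_{f_{\mathrm{alg}}}$ has no negative-length cycle; by the Miller--Naor characterization recalled at the start of the subsection on quasi-feasible flows, $f_{\mathrm{alg}}$ is therefore quasi-feasible in the contracted graph.

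Next I would invoke Lemma~\ref{lem:flow-by-reduced-lengths} with the quasi-feasible flow $f_{\mathrm{alg}}$ and the shortest-path tree $T$. This yields at once that the assignment $f(d) = c_0(d) - \length_T(d)$ set in Line~\ref{final} respects the capacities $c_0$, and moreover that $f = f_{\mathrm{alg}} + \phi$ for some circulation $\phi$. To see that $f$ is a genuine feasible flow in the contracted graph and not merely a capacity-respecting assignment, I would combine two facts: invariant~(1) of Lemma~\ref{lem:fix} (together with the observation that pivots in Line~\ref{relax} do not alter the flow, so the invariant persists at termination) shows that $f_{\mathrm{alg}}$ obeys conservation at every node other than the sources, super-sources, and $t$; and adding a circulation preserves conservation at every node. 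Hence $f$ is a feasible flow in the contracted graph.

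It then remains to establish maximality. Since $\phi$ is a circulation it obeys conservation at $t$, so $\inflow_\phi(t) = 0$ and therefore $\inflow_f(t) = \inflow_{f_{\mathrm{alg}}}(t)$; that is, $f$ has exactly the value of the flow pushed by the algorithm. Lemma~\ref{lem:maximum} holds as an invariant throughout the execution, in particular at termination, and bounds the value of every feasible flow in the contracted graph by $\inflow_{f_{\mathrm{alg}}}(t)$. As $f$ is feasible and attains this bound, it is a maximum feasible flow.

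The step I expect to be the main obstacle is the bookkeeping behind the residual-capacity interpretation on the contracted graph: one must verify that the cycle-interior deletions (Line~\ref{delete}) and the pushbacks (Line~\ref{fix}) leave the final lengths $\length(\cdot)$ equal to $c_0 - f_{\mathrm{alg}}$ on every surviving dart, so that ``no unrelaxed darts remain'' genuinely certifies quasi-feasibility of $f_{\mathrm{alg}}$ rather than of some unrelated assignment, and that the value $\inflow_{f_{\mathrm{alg}}}(t)$ compared by Lemma~\ref{lem:maximum} is the value in the very same contracted graph. Granting this consistency, the remainder is a direct composition of the two cited lemmas with the elementary fact that adding a circulation changes neither capacity-feasibility nor the value at the sink.
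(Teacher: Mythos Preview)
Your proposal is correct and follows essentially the same route as the paper: termination of the while loop gives a shortest-path tree and hence quasi-feasibility, Lemma~\ref{lem:flow-by-reduced-lengths} then yields feasibility of $f$ and that it differs from the pushed flow by a circulation, Lemma~\ref{lem:fix} supplies conservation, and Lemma~\ref{lem:maximum} supplies maximality. Your write-up is in fact more explicit than the paper's in spelling out why $f$ and $f_{\mathrm{alg}}$ have the same value at $t$ (via $\inflow_\phi(t)=0$), a step the paper leaves implicit.
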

\begin{proof}
The flow pushed by the algorithm satisfies conservation by Lemma~\ref{lem:fix}.
By Lemma~\ref{lem:maximum}, there is no feasible flow of greater
value.
Since there are no unrelaxed darts, $T$ is a shortest-path tree in ${G^*}$
and ${G^*}$ contains no negative cycles. Therefore, the flow pushed by the
algorithm is quasi-feasible. This shows that the conditions
of Lemma~\ref{lem:flow-by-reduced-lengths} are satisfied. It follows
that $f$ computed in Line~\ref{final}  satisfies conservation, respects the capacities
$c_0$ and has maximum value. 
\end{proof}

\begin{lemma}\label{lem:uncontracted-preflow}
The flow assignment $f(d)$  is a
feasible maximum preflow in the (uncontracted) graph $G$ w.r.t.
the capacities $c_0$.
\end{lemma}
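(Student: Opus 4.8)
The plan is to exhibit the output assignment $f$ as the superposition of all the flows recorded during the run of the algorithm together with the flow computed in Line~\ref{final}, and then to check feasibility, maximality of the value, and the preflow property in that order. First I would pin down how $f$ is assembled: every dart $d$ of $G$ has its value $f(d)$ fixed exactly once, either in some execution of Line~\ref{record} --- at the unique moment $d$ first becomes strictly enclosed by a processed cycle, immediately before $d$ is contracted away --- or in Line~\ref{final}, if $d$ survives in the final contracted graph. This makes $f$ well defined on every dart of the uncontracted $G$.

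Feasibility then follows dart by dart: a dart fixed in Line~\ref{record} satisfies $f(d)\le c_0(d)$ by part~(\ref{record-capacities}) of Lemma~\ref{lem:flow-in-saturated-cycle}, while a dart fixed in Line~\ref{final} satisfies $f(d)\le c_0(d)$ by Lemma~\ref{lem:maximum-feasible}. For the value, I would note that $t$ is incident to $f_\infty$ and is never enclosed by any processed cycle, so no dart incident to $t$ is ever strictly enclosed; hence $\inflow_f(t)$ equals the value of the maximum feasible flow computed in the contracted graph (Lemma~\ref{lem:maximum-feasible}). By Lemma~\ref{lem:maximum} no feasible flow in $G$ attains a larger inflow at $t$, and since any feasible preflow has value at most the maximum feasible-flow value (for any cut separating the sources from $t$, the excess on the $t$-side is nonnegative, so $\inflow_f(t)$ is bounded by the cut capacity), $f$ is a maximum preflow once we confirm it is a preflow.

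Establishing the preflow property is the crux. I would prove $\inflow_f(v)\ge 0$ for every non-source node $v$ by looking at the first processing step, if any, at which $v$ is contracted. If $v$ is never contracted it survives into the final graph, where Lemma~\ref{lem:maximum-feasible} gives a genuine flow and hence $\inflow_f(v)=0$. Otherwise $v$ lies on the interior side $X$ of the cut of some processed cycle $C$, and I would split the darts at $v$ into those internal to $X$, whose flow is recorded at this step, and those in $\vec{\delta}_c$ leaving $v$, whose flow is fixed only later but always feasibly. If $v$ is not the tail of any dart of $\vec{\delta}_c$, part~(\ref{record-conservation}) of Lemma~\ref{lem:flow-in-saturated-cycle} gives conservation among its internal darts, so $\inflow_f(v)=0$. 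Otherwise, writing $I_{\mathrm{int}}(v)$ for the net inflow contributed by the recorded internal darts, antisymmetry gives $\inflow_f(v)=I_{\mathrm{int}}(v)-\sum_{d\in\vec{\delta}_c:\,\tail(d)=v}f(d)$; part~(\ref{record-saturate}) of Lemma~\ref{lem:flow-in-saturated-cycle} bounds $I_{\mathrm{int}}(v)\ge\sum_{d\in\vec{\delta}_c:\,\tail(d)=v}c_0(d)$, while feasibility of the crossing darts bounds the subtracted sum by the same quantity, so $\inflow_f(v)\ge 0$.

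The hard part will be the boundary bookkeeping under nested contractions. I would have to check that the internal/crossing split at the burial time of $v$ remains consistent when an incident edge leads to a super-source formed at an earlier step, and that a crossing dart's eventual value --- set at a later Line~\ref{record} or in Line~\ref{final} --- never exceeds its capacity, so that the subtracted sum in the inflow identity cannot overturn the saturation bound of part~(\ref{record-saturate}). Once this is in place, strictly positive excess can occur only at the cut boundaries of contracted regions, while negative inflow is confined to the sources (each contracted region encloses a source by Lemma~\ref{lem:sources-in-saturatable-cycle}); this is exactly a feasible maximum preflow, which Line~\ref{conv} finally converts into a maximum flow.
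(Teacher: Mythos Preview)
Your proposal is correct and follows essentially the same route as the paper: well-definedness because each dart is assigned once (Line~\ref{record} or Line~\ref{final}); feasibility from part~(\ref{record-capacities}) of Lemma~\ref{lem:flow-in-saturated-cycle} together with Lemma~\ref{lem:maximum-feasible}; the preflow inequality at cut-boundary nodes from part~(\ref{record-saturate}) combined with $f(d)\le c_0(d)$ on the crossing darts; and maximality from Lemma~\ref{lem:maximum-feasible}. Your treatment is more explicit than the paper's about why $\inflow_f(t)$ coincides with the value in the contracted graph and about the nested-contraction bookkeeping at boundary nodes, but these are elaborations of the same argument rather than a different approach.
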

\begin{proof}
$f$ is well defined since each dart is assigned a
value exactly once; in Line~\ref{record} at the time it is contracted,
or in Line~\ref{final} if it was never contracted. By
Lemma~\ref{lem:maximum-feasible} and by part~\ref{record-capacities}
of Lemma~\ref{lem:flow-in-saturated-cycle}, $f(d) \leq c_0(d)$ for all
darts $d$, which shows feasibility.
By Lemma~\ref{lem:maximum-feasible} and by part~\ref{record-conservation} of
Lemma~\ref{lem:flow-in-saturated-cycle},  flow is conserved everywhere
except at the sources, the sink, and nodes that are tails of darts of
processed cycles. However, for a node $v$ that is the tail of some
dart of a processed cycle, $\sum_{d : head(d) = v}f(d) \geq 0$. This
is true by part \ref{record-saturate} of
Lemma~\ref{lem:flow-in-saturated-cycle}, and since $f(d) \leq c_0(d)$ for
any dart.  $f(\cdot)$ is therefore a preflow.
Finally, the value of $f(\cdot)$ is maximum by Lemma~\ref{lem:maximum-feasible}.   
\end{proof}

Lemma~\ref{lem:uncontracted-preflow} completes the proof of correctness
since line~\ref{conv} converts the maximum feasible
preflow into a feasible flow of the same value.

\subsubsection{Efficient implementation}

The dual tree $T$ is represented by a table $\text{parentD}[\cdot]$
that, for each nonroot node $v$, stores the dart $\text{parentD}[v]$
of $T$ whose head in $G^*$ is $v$.  The primal tree $\tau$ is
represented using a dynamic-tree data structure such as self-adjusting
top-trees~\cite{AHLT05}.  Each node and each edge of $\tau$ is
represented by a node in the top-tree.  Each node of the top-tree that
represents an edge $e$ of $\tau$ has two weights, $w_L(e)$ and
$w_R(e)$.  The values of these weights are the reduced lengths of the
two darts of $e$, the one oriented towards leaves and the one oriented
towards the root.\footnote{Note
that the length of the the cycle $C$ in line~\ref{negcyc} is exactly
the reduced length of $\hat d$.}
  The weights are represented so as to support
an operation that, given a node $x$ of the top-tree and an amount
$\Delta$, adds $\Delta$ to $w_R(e)$ and subtracts $\Delta$ from
$w_L(e)$ for all edges $e$ in the $x$-to-root path in $\tau$.

This representation allows each of the following
operations be implemented in $O(\log n)$ time:
lines~\ref{leafmost},~\ref{negcyc},~\ref{relax},~\ref{record}, and~\ref{delete}, the loop of
line~\ref{initialize-iteration}, and the loop of
line~\ref{fix}.\footnote{The whole initialization in the loop of
  line~\ref{initialize} can instead be carried out in linear time by
  working up from the leaves towards the root of $\tau$.}

\subsubsection{Running-Time Analysis}

Lines~\ref{record} and~\ref{delete} are executed at most once per edge.
To analyze the running time it therefore suffices to bound the number
of pivots in line~\ref{relax} and the number of negative cycles
encountered by the algorithm. Note that every negative length cycle
strictly encloses at least one edge. This is because the length of any
cycle that encloses just a single source is initially set to
zero, and since the length of the cycle that encloses just a single
super-source is set to zero when the corresponding negative cycle is
processed and contracted. 

Since the edges strictly enclosed by a processed cycle are deleted,
the number of processed cycles is bounded by the number of edges,
which is $O(n)$. 

It remains to bound the number of pivots.
We will prove that at the tree $T$ satisfies a
property called \emph{right-shortness}.
This property implies that the number of times a given dart pivots
into $T$ is bounded by the diameter of the graph.

\begin{definition}\cite{Klein05}
A tree $T$ is \emph{right-short} if for all nodes $v \in T$ there is
no \emph{simple} root-to-$v$ path $P$ that is:
as short as $T[v]$ and strictly right of $T[v]$
\end{definition}

Since $T$ is initialized using right first search, initially, for
every node $v$ there is no simple path in ${G^*}$ that is strictly
right of $T[v]$. Therefore, initially, $T$ is right-short. 
The algorithm changes $T$ in two ways; either by making a pivot
(line~\ref{relax}) or by processing a negative-length cycle
(line~\ref{fix}). 
\iffull{
The following lemma was proved in~\cite{Klein05}:
}

\begin{lemma}\cite{Klein05}\label{rs2}  leafmost dart relaxation
  (line~\ref{relax}) preserves right-shortness.
\end{lemma}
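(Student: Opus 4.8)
The plan is to fix the length assignment---note that a pivot (line~\ref{relax}) alters only $T$ and never $\length(\cdot)$---and to show that if $T$ is right-short then so is the tree $T'$ obtained by deleting the parent edge of $w:=\head_{G^*}(\hat d)$ and inserting $\hat d$. First I would record the effect of the pivot on root-to-node paths. Because $\hat d$ is not a back-edge, $\tail_{G^*}(\hat d)$ is not a descendant of $w$ in $T$; hence $T'[x]=T[x]$ for every node $x$ outside the $T$-subtree of $w$, whereas for every descendant $x$ of $w$ the new path is $T[\tail_{G^*}(\hat d)]\cdot\hat d$ followed by the (unchanged) $w$-to-$x$ subtree path, so $\length(T'[x])=\length(T[x])+\length_T(\hat d)$, which is strictly less than $\length(T[x])$ since $\hat d$ is unrelaxed. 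A useful preliminary observation, which also pins down an orientation I will need, is that the new root-to-$w$ path $T[\tail_{G^*}(\hat d)]\cdot\hat d$ is simple (its head $w$ would lie on $T[\tail_{G^*}(\hat d)]$ only in the excluded back-edge case) and strictly shorter than $T[w]$; right-shortness of $T$ at $w$ therefore forbids it from being strictly right of $T[w]$, so it is (weakly) left of $T[w]$. Equivalently, the elementary cycle $\hat C$ of $\hat d$ is clockwise (possibly trivial), and consequently $T'[v]$ is (weakly) left of $T[v]$ for every descendant $v$ of $w$.

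For a node $x$ outside the subtree of $w$ nothing changes: $T'[x]=T[x]$ and the lengths are untouched, so a simple path as short as $T'[x]$ and strictly right of $T'[x]$ would already witness a failure of right-shortness of $T$. The entire content of the lemma is thus the case of a descendant $v$ of $w$, which I would treat by contradiction. Suppose $P$ is a simple root-to-$v$ path with $\length(P)\le\length(T'[v])$ that is strictly right of $T'[v]$. Since $\length(T'[v])<\length(T[v])$, the path $P$ is strictly shorter than $T[v]$, so right-shortness of $T$ (applied at $v$) prevents $P$ from being strictly right of $T[v]$; hence $P$ is (weakly) left of $T[v]$. Combined with the orientation established above ($T'[v]$ weakly left of $T[v]$) and with $P$ strictly right of $T'[v]$, this places $P$ strictly between $T'[v]$ and $T[v]$---two paths sharing the suffix from $w$ to $v$ and whose differing prefixes to $w$ bound precisely the region enclosed by $\hat C$.

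The decisive ingredient is leafmostness, read through the interdigitating-tree duality exactly as in the proof of Lemma~\ref{lem:flow-in-saturated-cycle}: the darts strictly enclosed by $\hat C$ are the $\tau$-descendants of $\hat d$, none of which is unrelaxed, so the restriction of $T$ to the interior of $\hat C$ is a shortest-path tree and $\hat d$ is the only improving dart across $\hat C$. I would then use that $P$ is trapped in this region to reach a contradiction: the portion of $P$ inside $\hat C$ witnesses a root-to-$v$ route that is no longer than $T[v]$ yet strictly right of $T[v]$ once the single improvement carried by $\hat d$ is absorbed by the shortest-path-tree structure inside $\hat C$, contradicting right-shortness of $T$. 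The main obstacle is precisely this last topological step. Because ``left of'' is not transitive in the presence of winding, the sandwiching of $P$ and the rerouting must be justified with winding numbers about $\hat C$ (as set up in the preliminaries), and one must verify that the rerouted path remains \emph{simple} and \emph{strictly} right of $T[v]$. This crossing-and-winding bookkeeping, together with the clean duality translation of the primal leafmost condition into the dual ``shortest-path tree inside $\hat C$'' statement, is where the real work lies; the length accounting and the non-descendant case are routine.
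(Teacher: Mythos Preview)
The paper does not give a proof of this lemma at all: it simply cites \cite{Klein05} and moves on. So there is no ``paper's own proof'' to compare against; your sketch should be read as an attempt to reconstruct Klein's original argument rather than to match anything here.

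As a reconstruction, your outline is on the right track and is essentially the skeleton of Klein's proof: trivialize nodes outside the pivot subtree, use right-shortness at $w$ to orient the elementary cycle, and then use the leafmost hypothesis to argue that no offending path $P$ can hide in the region between $T[v]$ and $T'[v]$. Two comments. First, your orientation step is in fact solid for a reason you do not make explicit: $T'[w]\cdot\rev(T[w])$, after cancelling the common prefix $T[\mathrm{LCA}]$, is a \emph{simple} cycle (the two tree branches below the LCA are internally disjoint and $\hat d$ is a non-tree dart), so ``not strictly right'' really does force ``left''; you should say this rather than invoke ``weakly left'' as though it followed formally from the definition. Second, you are right that the crux is the sandwiching step, and honest that you have not carried it out. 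In Klein's argument this is handled not by a vague ``rerouting'' but by looking at the \emph{first} place $P$ enters the interior of the elementary cycle and using that, by leafmostness, the tree restricted to that interior already realizes shortest paths; this lets one replace the inside segment of $P$ by a tree segment without increasing length and while preserving ``strictly right of $T[v]$'', producing the contradiction. The winding-number machinery from the preliminaries is used in \cite{Klein05} for the pivot-count bound, not for this lemma, so you may be overcomplicating the obstacle you flag.
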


\begin{lemma}\label{rs1}  Right-shortness is preserved when processing
the counterclockwise negative-length  cycle  $C$ in line~\ref{fix}.
\end{lemma}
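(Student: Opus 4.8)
The plan is to isolate what line~\ref{fix} actually does to the length function and to show that this change can only \emph{lengthen} candidate right-paths relative to tree paths. Write $\length$ and $\length'$ for the lengths just before and just after the pushback. The first observation is that every dart whose length is altered in line~\ref{fix} lies on the $\hat d$-to-$t$ path of $\tau$, and hence is an edge of $\tau$, i.e.\ a \emph{non-tree} dart of $T$. Since $T[v]$ consists solely of darts of $T$, we have $\length'(T[v])=\length(T[v])$ for every node $v$. Moreover the pushback is, by construction, a flow pushed \emph{from} $t$ of value $|\length(C)|\ge 0$, so Lemma~\ref{lem:delta-cw} is the natural engine; the point is to combine it with the invariance of tree-path lengths.

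I would first record an extension of Lemma~\ref{lem:delta-cw} from simple cycles to arbitrary counterclockwise circulations. Writing such a circulation as $D=\sum_f p_f\,\uvec_f$ with all $p_f\le 0$ (this is the meaning of counterclockwise), each face boundary $\uvec_f$ is a simple \emph{clockwise} cycle, so by Lemma~\ref{lem:delta-cw} its length does not increase under the pushback, i.e.\ $\length'(\uvec_f)-\length(\uvec_f)\le 0$. Since every $p_f\le 0$, the quantity $\length'(D)-\length(D)=\sum_f p_f\bigl(\length'(\uvec_f)-\length(\uvec_f)\bigr)$ is a nonnegative combination, so $\length'(D)\ge\length(D)$. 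Thus \emph{no} counterclockwise circulation shrinks when flow is pushed from $t$.

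The heart of the proof is then a one-line contradiction. Suppose right-shortness fails after the pushback: there exist a node $v$ and a simple root-to-$v$ path $P$ that is strictly right of $T[v]$ with $\length'(P)\le\length'(T[v])$. Being strictly right is a topological property of the embedding that line~\ref{fix} leaves untouched, so the circulation $D=P\,\rev(T[v])$ is counterclockwise both before and after. Using $\length'(T[v])=\length(T[v])$ we get $\length'(D)=\length'(P)-\length'(T[v])\le 0$, and the extension above gives $\length(D)\le\length'(D)\le 0$, i.e.\ $\length(P)\le\length(T[v])$ with respect to the \emph{old} lengths. But then $P$ is a simple root-to-$v$ path, strictly right of $T[v]$ and as short as $T[v]$, contradicting the fact that $T$ was right-short before $C$ was processed. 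Note that this argument never requires $D$ to be a simple cycle, which is exactly why the circulation-level extension is needed.

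Finally I would dispatch the deletion of the interior of $C$. Because $t$ (the infinite face of $G^*$) is incident to the root $f_\infty$, the counterclockwise cycle $C$ does not strictly enclose $f_\infty$; hence for every node $v$ surviving the contraction the tree path $T[v]$ runs from the root to $v$ without entering the strictly-enclosed subtree (a simple tree path cannot enter and leave that region), so $T[v]$ and its left/right relation to any surviving path are unchanged. Deleting the enclosed darts only removes candidate paths $P$, so any violation of right-shortness in the contracted graph would already be a violation in the graph obtained immediately after line~\ref{fix}, which the previous paragraph rules out. I expect the main obstacle to be precisely the non-simplicity of $D=P\,\rev(T[v])$: Lemma~\ref{lem:delta-cw} is stated for simple cycles, so its monotonicity cannot be quoted verbatim and must be re-derived for counterclockwise circulations through the face-potential decomposition above.
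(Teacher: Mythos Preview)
Your proof is correct and takes essentially the same route as the paper: both observe that line~\ref{fix} touches only darts of $\tau$ so $\length(T[v])$ is unchanged, and both apply Lemma~\ref{lem:delta-cw} to the (counter)clockwise cycle formed by concatenating $T[v]$ with the candidate right path to conclude the candidate cannot become shorter than $T[v]$. You are more careful than the paper on two points it passes over---the possible non-simplicity of $P\,\rev(T[v])$, which you handle by the face-potential decomposition, and the effect of contracting the interior of $C$---but the core argument is identical.
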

\ifnfull{
The proof is omitted. The main idea is that the changes in lengths of darts in line~\ref{fix}
correspond to pushing back flow from the sink $t$. 
By Lemma~\ref{lem:delta-cw}, the length of any counterclockwise
cycle does not decrease. Since lengths of darts of $T$ are not
affected by line~\ref{fix}, this implies that right-shortness is preserved.
}{
\begin{proof}
For a node $v$, let $P$ be denote the tree path $T[v]$. Let $Q$ be a simple path that is strictly right of $P$. By right-shortness of $T$, before line~\ref{fix} is executed $\length(Q) > \length(P)$. We show that this is also the case after line~\ref{fix} is executed.
By definition of right-of, $P \circ \rev(Q)$ is a clockwise
cycle. Note that since $C$ is counterclockwise, the changes in lengths of darts in line~\ref{fix}
correspond to pushing back flow from the sink $t$. 
Therefore, by Lemma~\ref{lem:delta-cw}, the length of $P \circ
\rev(Q)$ can only decrease. 
Since no dart of $P$ changes its length (only darts of $\tau$ are
changed in line~\ref{fix}), this implies that $rev(Q)$ can only
decreases in length. By antisymmetry, the length of $Q$ can only
increase, so $\length(Q) > \length(P)$ after the execution as well.
\end{proof}
}
We have thus established that
\begin{corollary}
$T$ is right-short at all times.
\end{corollary}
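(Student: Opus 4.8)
The plan is to prove the corollary by a straightforward induction on the number of operations the algorithm performs on $T$, invoking the two preceding lemmas to handle the inductive step. First I would establish the base case. As already observed, initializing $T$ by right-first search guarantees that for every node $v$ there is no simple path in ${G^*}$ that is strictly right of $T[v]$; in particular there is no such path whose length is at most $\length(T[v])$, so $T$ is right-short before the main loop begins.

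For the inductive step, I would argue that $T$ is modified only in two places: the pivot in line~\ref{relax} and the negative-cycle processing in line~\ref{fix}. Assuming $T$ is right-short at the start of an iteration, the case of a pivot is handled directly by Lemma~\ref{rs2}. For the case of processing a negative-length cycle $C$, the key point---and the one requiring the most care---is that Lemma~\ref{rs1} only guarantees preservation of right-shortness for a \emph{counterclockwise} cycle, so I must first verify that the cycle actually processed is counterclockwise. This follows from invariant~(3) of Lemma~\ref{lem:fix}, which asserts that no clockwise negative-length cycle exists just before line~\ref{fix} is executed; since $C$ has negative length, it must therefore be counterclockwise, and Lemma~\ref{rs1} applies.

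The one remaining subtlety is the deletion of the interior of $C$ that accompanies the processing step, since Lemma~\ref{rs1} as stated addresses only the length changes. I would note that deleting darts from ${G^*}$ can only shrink the set of candidate root-to-$v$ paths, and that it does not alter any surviving tree path $T[v]$: because $t \notin X$, the root $f_\infty$ lies outside $C$, so no $T[v]$ for a surviving node $v$ passes through the deleted interior. Hence the deletion cannot produce a simple path that is simultaneously strictly right of $T[v]$ and no longer than $T[v]$, and right-shortness is preserved. Combining the base case with the two cases of the inductive step establishes that $T$ is right-short at all times, as claimed.
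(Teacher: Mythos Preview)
Your proposal is correct and follows essentially the same approach as the paper: the paper establishes the base case via right-first search and then simply invokes Lemmas~\ref{rs2} and~\ref{rs1} for the two ways $T$ can change, stating the corollary immediately afterward. Your additional care in explicitly checking that $C$ is counterclockwise (via invariant~(3) of Lemma~\ref{lem:fix}) and in arguing that the deletion step cannot destroy right-shortness is sound and fills in details the paper leaves implicit.
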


\begin{figure}
\ifthenelse{\equal{\arxiv}{true}}{
\centerline{\includegraphics[scale=0.3]{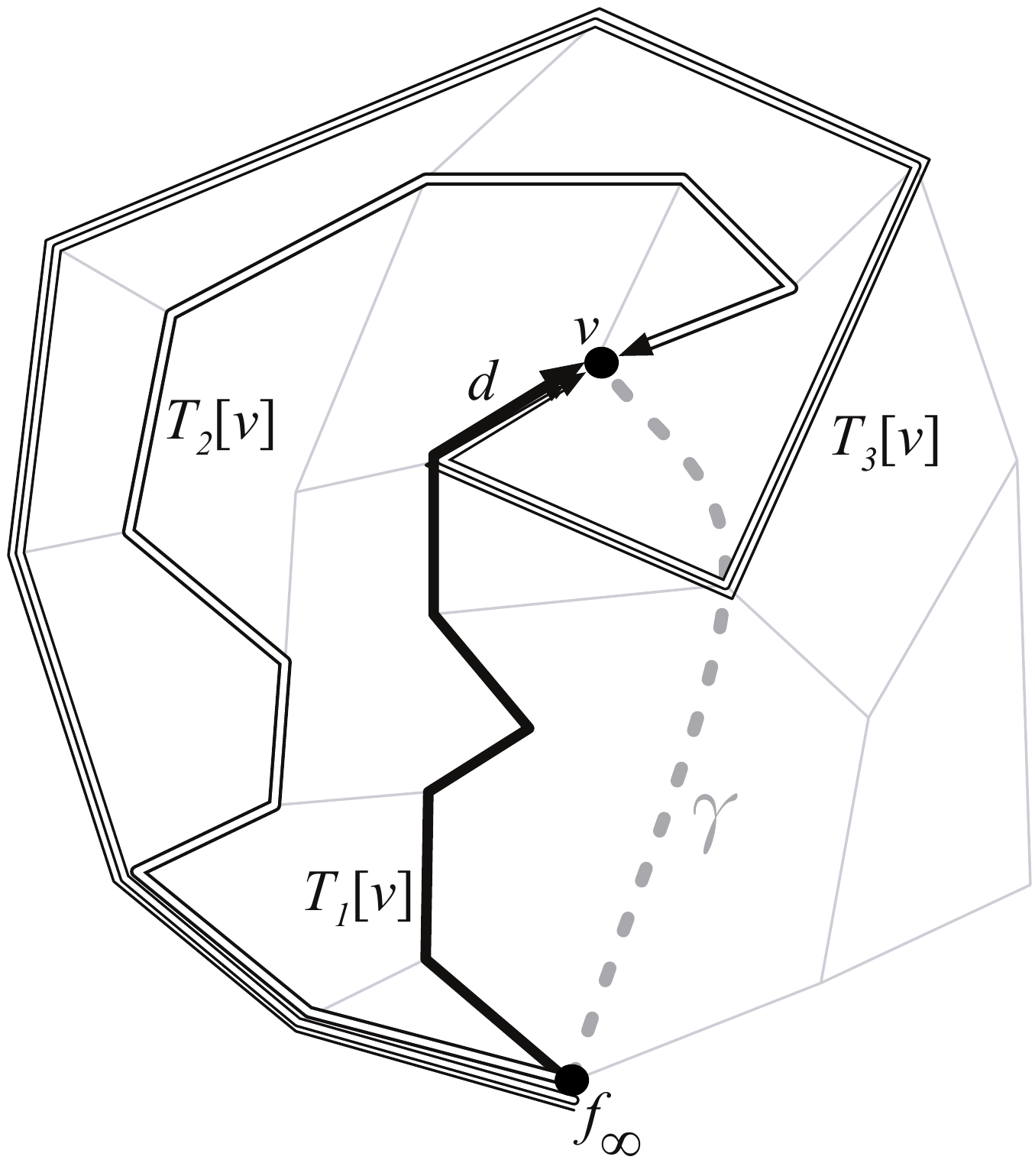}}
}{
\centerline{\includegraphics[scale=0.3]{figures/windings2BW.pdf}}
}
\caption{Bounding the number of pivots of dart $d$ whose head is a
  node $v$ of ${G^*}$. 
Three tree paths to $v$ at times $t_1 < t_2 < t_3$ are
  shown. $T_3[v]$ (triple-lined) is left of $T_2[v]$ (double-lined) and $T_2[v]$ is left of
  $T_1[v]$ (black) . The dart $d$ is in $T_1[v]$ and
  $T_3[v]$, but not in
  $T_2[v]$. Therefore, it must have pivoted out of $T$ and into $T$
  between time $t_1$ and $t_3$. 
 The curve $\gamma$ (dashed) visits
  only nodes of ${G^*}$. The winding number of $T_1[v]$ about $\gamma$ is
  0. The winding number of $T_3[v]$ must be greater than that of $T_1$ (1 in
  this example).} 
\label{fig:winding}
\end{figure}

Consider a dart $d$ of ${G^*}$ with head $v$. 
Let $\gamma$ be an arbitrary root$(T)$-to-$v$ curve on the sphere. 
Let $T_1[v]$ and $T_2[v]$ denote the tree path to $v$ at two distinct
times in the execution of the algorithm.
Since right-shortness is
preserved throughout the algorithm and since with every pivot
$\length(T[v])$ may only decrease, if $T_2[v]$ occurs later in the
execution than $T_1[v]$, then $T_2[v]$ is left of $T_1[v]$.
It follows that the winding number of $T[v]$ about $\gamma$ between any two occurrences of $d$  as
the pivot dart in line~\ref{relax} must increase. See Fig.~\ref{fig:winding}. 
Therefore, if we let $T_0$ denote the initial tree $T$, and $T_t$
denote the tree $T$ at the latest time node $v$ appears in ${G^*}$, then
the number of times $d$ may appear as the pivot dart in 
line~\ref{relax} is bounded by the difference of the winding numbers
of $T_t[v]$ and $T_0[v]$ about $\gamma$. Since $\gamma$ is arbitrary, we may choose
it to intersect the embedding of ${G^*}$ only at nodes, and to further
require that it visit the minimum possible number of nodes of
${G^*}$. This number is bounded by the diameter of the face-vertex
incidence graph of $G$, which is bounded by the minimum of the
diameter of $G$ and the diameter of ${G^*}$. Since both $T_0$ and $T$ are
simple, the absolute value of their winding number about $\gamma$ is trivially bounded
by the number of nodes $\gamma$ visits.  Therefore, the total number of pivots is bounded
by $\sum_d \diam = O(\diam \cdot n)$.
The total running time of the algorithm is thus bounded by $O(\diam \cdot
n \log n)$.

\bibliographystyle{plain}
\bibliography{diameter}

\end{document}